\documentclass{llncs}
\pagestyle{plain}

\usepackage[latin1]{inputenc}
\usepackage{amsmath,amsfonts}
\usepackage{enumitem}
\usepackage{tikz,tikz-qtree}
\usetikzlibrary{shapes,arrows}
\usepackage{epsfig}
\usepackage{pst-barcode}

\usepackage[linesnumbered,ruled,vlined]{algorithm2e}
\usepackage{epsfig}
\usepackage{amsmath,amssymb}
\newtheorem{thm}{Theorem}

\date{}

\def\bbe{{\mathbb E}}
\def\bbp{{\mathbb P}}

\def\itm#1{\par\parindent=20pt\noindent
   \hangindent\parindent\hbox to\parindent{#1\hss}\ignorespaces}
\def\dperp{\mathrel{\perp\kern-1em\perp}}

\makeatletter
\pgfmathdeclarefunction{square}{1}{%
\begingroup
 \pgfmathparse{#1*#1}
 \pgfmath@smuggleone\pgfmathresult%
\endgroup}

\title{Random-bit optimal uniform sampling for rooted planar trees with given sequence of degrees and Applications}

\author{O.Bodini \thanks{Supported by ANR Magnum project BLANC 0204 (France)} \and J. David \thanks{Supported by ANR Magnum project BLANC 0204 (France)} \and Ph. Marchal}

\institute{LIPN, Institut Galil\'ee, Universit\'e Paris 13, Villetaneuse (France)\\ LAGA, Institut Galil\'ee, Universit\'e Paris 13, Villetaneuse (France)}

\tikzstyle{block} = [draw, rectangle,minimum height=3em, minimum width=6em, text centered, text width=6em]
\tikzstyle{comment} = [minimum height=3em, minimum width=15em, text centered, text width=15em]

\begin{document}
\maketitle

\abstract{In this paper, we redesign and simplify an algorithm due to Remy et al. 
for the generation of rooted planar trees that satisfies a given partition of degrees.
%The interest of the paper resides in the simplicity of the proofs which could be explained to any undergrade student.
This new version is now optimal in terms of random bit complexity, up to a multiplicative constant. 
We then apply a natural process ``simulate-guess-and-proof'' to analyze the height of a random Motzkin in function of its frequency of unary nodes. When the number of unary nodes dominates, we prove some unconventional height phenomenon (i.e. outside the universal $\Theta(\sqrt{n} )$ behaviour.)}

\section{Introduction}

Trees are probably among the most studied objects in combinatorics,
computer science and probability. The literature on the subject is
abundant and covers many aspects (analysis of structural properties
such as height, profile, path length, number of patterns, but also
dynamic aspects such as Galton-Watson processes or random generation,
...) and use various techniques such as analytic combinatorics, graph
theory, probability,...

More particularly, in computer science, trees are a natural way to structure and manage
data, and as such, they are the basis of many crucial algorithms
(binary search trees, quad-trees, 2-3-4 trees, ...). In this article, we are
essentially interested in the random sampling of rooted planar trees.
This topic itself is also subject to a extensive study. To mention
only the best known algorithms, we can distinguish four approaches.
The first two of them are in fact more general, but can be applied
efficiently to the sampling of trees, the two others are ad hoc to
tree sampling:

\begin{enumerate}
\item
The random sampling by the recursive method \cite{FVCZ94} of generating a tree
from rules described with coefficients associated generating series \cite{DPT10},
\item The random generation under Boltzmann model that allows uniform
generation to approximate size from the evaluation of generating functions~\cite{DFLS04,BP10}.
\item The random generation by Galton-Watson processes based on the
dynamics of branching processes \cite{D12},
\item Samplers following Remy precepts~\cite{R85,ARS97,ARS97-2,BBJ13}.
\end{enumerate}

Concerning the generation of trees with a fixed degree sequence, the
reference algorithms are due to Alonso et al~\cite{ARS97}. However, the complete understanding of their approach
seems to us quite intricate. Moreover their approach is not optimal in terms of entropy (i.e. the minimum numbers of random bits necessary to draw an object uniformly as described in the famous Knuth-Yao paper~\cite{KY76}). 

In this article, we give two versions of an algorithm for drawing efficiently trees whose the degree sequence is given. 
Our first version is fast and easy to implement, and its description is simple and (we hope) natural.
It works, essentially like Alonso's algorithm, though we explicitly use the Lukasiewicz code of trees. 
Our second version only modifies the two first steps of the first algorithm. 
It is nearly optimal in terms of entropy because it uses only in average a linear number of random bits to draw a tree. 
Moreover, Lukasiewicz codes and a very elementary version of cyclic lemma allows us to give a simple proof of the theorem of Tutte~\cite{T64}
which gives under an explicit multinomial form the number of plane trees with a given partition of the degrees. 

From our sampler, we simulate various kind of trees. We focus our attention on unary-binary rooted planar trees (also called Motzkin trees) 
with a fixed frequency of unary nodes. 
In particular, we look for the variation of the height depending of the frequency of unary nodes. We can easily conjecture the  
nature of the variation. 

Our second contribution is to describe and prove the distribution of the height according to the number of unary nodes. 
The proof follows a probabilistic approach and uses in a central the notion of continuous random trees (CRT).
Even if the distribution of the height still follows a classical theta law, the expected value can leave the universal $\Theta(\sqrt{n} )$ behaviours.

The general framework used in this paper to describe trees is the analytic combinatorics even if we use some classical notion on word theory and a basis of probabilistic concepts in the second part of the paper. 
More specifically, we deal with the symbolic method to describe the bijection between Lukasiewicz words and trees. 
A {\em combinatorial class} is a set of discrete objects $\mathcal{O}$, provided with a (multidimensional) \emph{size function} $s: \mathcal{O} \rightarrow \mathbb{N}^d$ for some integer $d$, in such way that for every $\boldsymbol{n} \in \mathbb{N}^d$, the set of discrete objects of size $\boldsymbol{n}$, denoted by $\mathcal{O}_{\boldsymbol{n}} $, is finite. In the classical definition, the size is just scalar, but for our parametrized problem this extension is more convenient. For more details, see for instance \cite{FS09}. This approach is very well suited to the definition of trees. For instance, the class of binary trees $\mathcal{B}$ can be described by the following classical specification : $\mathcal{B}=\mathcal{Z}+\mathcal{Z}\mathcal{B}^2$.

In this framework, random sampling can be interpreted as follows. A {\em size uniform random generator }is an algorithm that generate discrete objects of a combinatorial class $(\mathcal{O},s)$, such that 
for all objects $o_1,o_2 \in \mathcal{O}_{\boldsymbol{n}} $ of the same size, the probability to generate $o_1$ is equal
to the probability to generate $o_2$.

The paper is organized as follows.
Section~\ref{sec:def} presents the definition of tree-alphabets, valid words, Lukasiewicz words ordered trees and the links  between the objects.
Section~\ref{sec:alg1} presents a re-description of an algorithm by Alonso et al.~\cite{ARS97}, using the notion of Lukasiewicz words.
Our approach is to prove the algorithm step by step, using simple arguments. 
Section~\ref{sec:dich} present the dichotomic sampling method, which directly generates random valid words, using a linear
number of random bits. 
The last part of the paper follows a simulate-guess-and-prove scheme. We first show some examples of random trees obtained from the generator.
Then, we experimentally and theoretically study the evolution of the tree's height according to the proportion of unary nodes.

\section{Words and Trees \label{sec:def}} 

\subsection{Valid words and Lukasiewicz Words \label{sec:def1}}
This section is devoted to recall the one-to-one map between trees and Lukasiewicz words. This bijection is the central point for the sampling part of the paper.
%\subsection{Words}
Let us recall basic definitions on words. 
An {\em alphabet} $\Sigma$ is a finite tuple $(a_1,...,a_d)$ of distinct symbols called {\em letters}.
A word $w$ defined on $\Sigma$ is a sequence of letters from $\Sigma$.
In the following, $w_i$ denotes the $i$-th letters of the word $w$, $|w|$ its length and for all letter $a \in \Sigma$, $|w|_a$ counts
the occurrences of the  letter $a$ in $w$. 
A \emph{language} defined on $\Sigma$ is a set of words defined of $\Sigma$.

The following new notion of {\em tree-alphabet} will make sense in the next sections. It will allow us to define subclasses of Lukasiewicz words which are in relation to natural combinatorial classes of trees.
\begin{definition}
A {\em tree-alphabet} $\Sigma_f$ is a couple $(\Sigma,f)$ constituted by an alphabet $\Sigma=(a_1,\ldots ,a_k)$ and a function $f: \Sigma \rightarrow \mathbb{N} \cup \{-1\}$ that associates each symbol of $\Sigma$ to an integer such that:
\begin{enumerate}[label=\roman*.]
\item
$f(a_1)=-1$,
\item
$f(a_i)\le f(a_{i+1})$, for $1 \le i < k$.
\end{enumerate}
\end{definition}
%\begin{example} \label{ex:tree-al}
%$\Sigma=\{a,b,c\}$ and
%\begin{itemize}
%\item
%$f(a)=-1$,
%\item
%$f(b)=0 , f(c)= 1$,
%\end{itemize}
%\end{example}
  
We finish this section by introducing Lukasiewicz words. 
%Let us that that purpose, define some intermediate notions. The notion of {\em tree-alphabet} will make sense in the following sections. It will allow us to define subclasses of Lukasiewicz words will be in relation to natural combinatorial classes of trees. A {\em tree-alphabet} $\Sigma_f$ is a couple $(\Sigma,f)$ where $\Sigma=\{a_1,\ldots ,a_k\}$ is 
%an alphabet of $k$ letters and $f: \Sigma \rightarrow \mathbb{N} \cup \{-1\}$
%is a function that associates each symbol of $\Sigma$ to an integer such that:
%\begin{itemize}
%\item
%$f(a_1)=-1$,
%\item
%$f(a_i)\le f(a_{i+1})$, for $1 \le i < k$,
%\end{itemize}

%Let $\Sigma_f$ be a tree-alphabet of $k$-letters and $\boldsymbol{n} \in \mathbb{N}^k$ 
%The tuple $(\Sigma_f,\boldsymbol{n})$ is {\em valid} if 
%$$\sum_{i =1}^k n_i  f(a) = -1 $$
%In this case, the combinatorial class we are interested in is the set of valid tuples $(\Sigma_f,\boldsymbol{n})$.
%
%
%A word $w$ defined over $\Sigma_f$ is valid if 
%\begin{itemize}
%\item
%for all letter $a_i \in \Sigma$, $n_i=|w|_{a_i} $
%\item
%the tuple $(\Sigma_f,\boldsymbol{n})$ is valid.
%\end{itemize}
%We note $\Sigma_f^{\boldsymbol{n}}$ the set of valid word
%
%
%A word $w$ is a {\em Lukasiewicz word} if 
%\begin{enumerate}
%\item
%$w$ is a valid word.
%\item
%for all $i < |w|$, we have 
%$$\sum_{j=1}^i f(w_j) \ge 0 $$
%\end{enumerate}
%For this definition, it is obvious that the Lukasiewicz words are a subset of valid words.
\begin{definition}
A word $w$ on the tree-alphabet  $\Sigma_f=((a_0,...,a_k),f)$ is a {\em $f$-Lukasiewicz word} if :

\begin{enumerate}[label=\roman*.]
%\item
%$f(a_0)=-1$,
%\item
%$f(a_i)\le f(a_{i+1})$, for $1 \le i < k$,
\item
for all $i < k$, we have 
$\sum_{j=0}^i |w|_{a_j}f(a_j) \ge 0 $
\item $\sum_{i =1}^k |w|_{a_i}  f(a_i) = -1 $
\end{enumerate}
\end{definition}

When the condition ii. is verified, we say that the word $w$ if \emph{$f$-valid}. By extension and convenience, we also say that a $k$-tuple $(n_1,\ldots,n_k)$ is $f$-valid $\sum_{i =1}^k f(n_i) = -1 $.

The \emph{Lukasiewicz words $\mathcal{L}_f$} are just the union over all tree-alphabet $\Sigma_f$ of the $f$-Lukasiewicz words.

A classical and useful representation of words on a tree-alphabet is to plot a path describing the evolution of $\sum_{j=1}^i f(w_j)$. Then, a word of size $n$ is valid if and only if the path terminates at position $(n,-1)$ and it is a Lukasiewicz word if and only if the only step that goes under the $x-axis$ is the last one. In particular, these remarks prove that we can verify in linear time if a word is or not a Lukasiewicz word.

 For instance, if $f(a)=-1$, $f(b)=0$ and $f(c)=1$, the following paths represent (from left to right) a Lukasiewicz word, a $f$-valid word and a 
 non valid word:\\
\begin{minipage}{0.28\textwidth}
\begin{tikzpicture}[scale=0.3]
  \draw[very thin,color=gray] (-0.1,-2.1) grid (8,2);
  \draw[->] (-0.2,0) -- (8,0) node[right] {$i$};
  \draw[->] (0,-2) -- (0,2.4) node[above] {{\tiny $\sum_{j=1}^i f(w_i)+1$}};
  \draw[ultra thick] (0,0) -- (1,1) ;
  \draw[ultra thick] (1,1) -- (2,2) ;
  \draw[ultra thick] (2,2) -- (3,2) ;
  \draw[ultra thick] (3,2) -- (4,1) ;
  \draw[ultra thick] (4,1) -- (5,1) ;
  \draw[ultra thick] (5,1) -- (6,0) ;
  \draw[ultra thick] (6,0) -- (7,-1) ;
  \node (w1)  at (0.5,-3.1) {$c$}; 
  \node (w2)  at (1.5,-3.1) {$c$}; 
  \node (w3)  at (2.5,-3) {$b$}; 
  \node (w4)  at (3.5,-3.1) {$a$}; 
  \node (w5)  at (4.5,-3) {$b$}; 
  \node (w6)  at (5.5,-3.1) {$a$};
  \node (w7)  at (6.5,-3.1) {$a$}; 
\end{tikzpicture}
\end{minipage}
\begin{minipage}{0.28\textwidth}
\begin{tikzpicture}[scale=0.3]
  \draw[very thin,color=gray] (-0.1,-2.1) grid (8,2);
  \draw[->] (-0.2,0) -- (8,0) node[right] {};
  \draw[->] (0,-2) -- (0,2.4) node[above] {\textcolor{white}{\tiny $\sum_{j=1}^i f(w_i)+1$}};
  \draw[ultra thick] (0,0) -- (1,0) ;
  \draw[ultra thick] (1,0) -- (2,-1) ;
  \draw[ultra thick] (2,-1) -- (3,-1) ;
  \draw[ultra thick] (3,-1) -- (4,-2) ;
  \draw[ultra thick] (4,-2) -- (5,-1) ;
  \draw[ultra thick] (5,-1) -- (6,0) ;
  \draw[ultra thick] (6,0) -- (7,-1) ;
  \node (w1)  at (0.5,-3) {$b$}; 
  \node (w2)  at (1.5,-3.1) {$a$}; 
  \node (w3)  at (2.5,-3) {$b$}; 
  \node (w4)  at (3.5,-3.1) {$a$}; 
  \node (w5)  at (4.5,-3.1) {$c$}; 
  \node (w6)  at (5.5,-3.1) {$c$};
  \node (w7)  at (6.5,-3.1) {$a$}; 
\end{tikzpicture} 
\end{minipage}
\begin{minipage}{0.28\textwidth}
\begin{tikzpicture}[scale=0.3]
  \draw[very thin,color=gray] (-0.1,-2.1) grid (8,2);
  \draw[->] (-0.2,0) -- (8,0) node[right] {};
  \draw[->] (0,-2) -- (0,2.4) node[above] {\textcolor{white}{\tiny $\sum_{j=1}^i f(w_i)+1$}};
  \draw[ultra thick] (0,0) -- (1,1) ;
  \draw[ultra thick] (1,1) -- (2,0) ;
  \draw[ultra thick] (2,0) -- (3,1) ;
  \draw[ultra thick] (3,1) -- (4,0) ;
  \draw[ultra thick] (4,0) -- (5,0) ;
  \draw[ultra thick] (5,0) -- (6,-1) ;
  \draw[ultra thick] (6,-1) -- (7,-2) ;
  \node (w1)  at (0.5,-3.1) {$c$}; 
  \node (w2)  at (1.5,-3.1) {$a$}; 
  \node (w3)  at (2.5,-3.1) {$c$}; 
  \node (w4)  at (3.5,-3.1) {$a$}; 
  \node (w5)  at (4.5,-3) {$b$}; 
  \node (w6)  at (5.5,-3.1) {$a$};
  \node (w7)  at (6.5,-3.1) {$a$}; 
\end{tikzpicture} 
\end{minipage} 

%\end{center}

Finally, we can give an alternative definition of Lukasiewicz words in the framework of the symbolic method as follows: a word $w$ defined over $\Sigma_f$ is a Lukasiewicz word
if $w=aw_1\ldots w_{f(a)+1}$n where $a \in \Sigma_f$ and $\forall i\le f(a)+1$, $w_i$ is a Lukasiewicz word.
In other word, the combinatorial class of  Lukasiewicz words follow the recursive specification:
$$L=\sum_{a \in \Sigma_f} a L^{f(s)+1}  $$

\subsection{The Tree classes \label{sec:def2}}
%First of all, let us reintroduce trees, in terms of graphs theory, it gives tractable notation and a natural point of view of trees.
%For that purpose, a \emph{directed graph} $G$ is a couple $(V,E)$ such that
%\begin{itemize}
%\item
%$V$ is a set of nodes
%\item
%$E : V\times V$ is a set of edges, that is to say a set of couples of nodes $(u,v)$.
%\end{itemize}
%For a given node $u \in V$, we define $out(u)=\{v \mid (u,v) \in E \}$ and $in(u)=\{v \mid (v,u) \in E \}$.
%The cardinal of $out(u)$ (resp. $in(u)$) is called the {\em out-degree} (resp. {\em in-degree}) of $u$.
%A path is a sequence of edges $a_1\cdots a_n$ such that, $\forall i<n$, we have $a_i=(u,v)$ and $a_{i+1}=(v,w)$.
%A path from $u$ to $v$ is a path $a_1\cdots a_n$ such that $a_1=(u,u')$ and $a_n=(v',v)$.
%
%A {\em tree} is a graph $(V,E)$ such that 
%\begin{itemize}
%\item
%there exists a unique node $r\in V$ whose in-degree is equal to $0$. Such node is called the tree's {\em root}.
%\item
%for all $u\in V\setminus \{r\}$, the in-degree of $u$ is equal to $1$ and there exists a unique path from $r$ to $u$.
%\end{itemize}
%The nodes whose out-degree is equal to $0$ are called leaves.
%
%A tree is called {\em planar} if for all $u \in V$, the set $out(u)$ is totally orderer.

%??? We consider a canonical encoding for planar trees.
%Let $F$ be the set of edges that composes the path starting from a node $u$.
%We encode $F$ with the sequence $s_u$ defined as:
%$$s_u=(u,a_1)s_{a_1}\cdots (u,a_l)s_{a_l} $$
%with $out(u)=\{a_1,\ldots,a_l\}$.
%For a given planar tree $(V,E)$, $E$ is encoded with $s_r$ where $r$ is the tree's root.???

Rooted planar trees are very classical combinatorial objects. Let us recall how we can define them recursively and how this can be described by a formal grammar. Let us begin by the rooted tree class $\mathcal{T}$ over the tree-alphabet $\Sigma_f$ which can be defined as the smallest set verifying:
\begin{itemize}
\item $[x] \in \mathcal{T}$ for every $x\in \Sigma$ such that $f(x)=-1$.
\item Let $x$ such that $f(x)=k$ and $T_1,\cdots,T_k$ in $\mathcal{T}$, then $x[T_1,\cdots,T_k]$ is in $\mathcal{T}$.
\end{itemize} 

%Let $\Sigma_f$ be a tree-alphabet, a \emph{$\Sigma_f$-labelled tree} $(V,E,\lambda)$ is a tree $(V,E)$ with a function $\lambda: V\rightarrow\Sigma $ such that for every $a\in \Sigma$, every node in the preimage $\lambda_d^{-1}(a)$ has out-degree $f(a)+1$.

%Now, we are going to consider the combinatorial class of trees where the size is defined as follows: the size of a tree $T$ having $n_i$ nodes of label $i$ is the vector $(a_0,\cdots,a_d$ where $d$ is the cardinality of the tree-alphabet the maximal out-degree of $T$.

So, the set  $\mathcal{T}$ of all planar \emph{$\Sigma_f$-labelled trees} is a combinatorial class 
whose the size of a tree $T$ is $(|f^{-1}(a_1)|,\cdots,|f^{-1}(a_d)|)$ where $\Sigma=(a_1,...,a_d)$. And just observing the recursive definition, we can specify it from the following symbolic grammar:
$$G= \sum_{s \in \Sigma_f} s G^{f(s)+1} $$

\begin{theorem}[Lukasiewicz]
The combinatorial class of $f$-Lukasiewicz words $\mathcal{L}_f$ is isomorphic to the combinatorial class of trees described by the specification (grammar) $G= \sum_{s \in \Sigma_f} s G^{f(s)+1} $.
\end{theorem}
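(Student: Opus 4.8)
The plan is to construct an explicit bijection $\Phi : \mathcal{T} \to \mathcal{L}_f$ between $\Sigma_f$-labelled trees and $f$-Lukasiewicz words, and to check that it preserves size, i.e.\ that a tree with $n_i$ nodes labelled $a_i$ maps to a word $w$ with $|w|_{a_i} = n_i$. Both sides are defined recursively by (essentially) the same grammar $G = \sum_{s\in\Sigma_f} s\, G^{f(s)+1}$ versus $L = \sum_{a\in\Sigma_f} a\, L^{f(a)+1}$, so the natural map is the one reading the tree in prefix (depth-first, left-to-right) order: $\Phi([x]) = x$ when $f(x) = -1$, and $\Phi(x[T_1,\ldots,T_k]) = x\,\Phi(T_1)\cdots\Phi(T_k)$ when $f(x)=k$. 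The first step is to verify this is well defined by induction on the tree structure, which is immediate from the recursive definition of $\mathcal{T}$.

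Next I would show $\Phi(T)$ is always an $f$-Lukasiewicz word. Using the path representation introduced just before the statement --- the word $w$ of length $n$ traces the walk $i \mapsto \sum_{j=1}^i f(w_j)$ --- I would prove by induction that for any tree $T$, the walk associated to $\Phi(T)$ starts at $0$, ends at $-1$, and stays $\ge 0$ strictly before the last step. The base case $f(x)=-1$ is a single down-step from $0$ to $-1$. For the inductive step, $\Phi(x[T_1,\ldots,T_k])$ starts with a step of height $f(x)=k-1$ (so the walk reaches $k-1 \ge 0$), and then the concatenated subwords $\Phi(T_1),\ldots,\Phi(T_k)$ each contribute, by the induction hypothesis, a walk fragment with net displacement $-1$ that stays above its own starting level until its last step; splicing these on top of the running height $k-1, k-2, \ldots, 1, 0$ shows the total walk never drops below $0$ until the very end, where it hits $-1$. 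The size-preservation claim $|\Phi(T)|_{a_i} = |f^{-1}(a_i)|$ is a trivial induction since each node contributes exactly its own letter once.

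The main work is surjectivity together with injectivity, i.e.\ that $\Phi$ is a bijection; equivalently, every $f$-Lukasiewicz word decomposes uniquely as $a\,w_1\cdots w_{f(a)+1}$ with each $w_j$ an $f$-Lukasiewicz word --- which is exactly the alternative recursive characterization of Lukasiewicz words stated at the end of Section~\ref{sec:def1}. So I would either invoke that characterization directly, or, to keep the proof self-contained, prove it from the path picture: given a Lukasiewicz word $w$ with first letter $a$ and $f(a) = k-1$, the walk goes up to level $k-1$, and one locates the successive \emph{last} times the walk visits levels $k-1, k-2, \ldots, 1, 0$; these cut points split the suffix into $k$ blocks $w_1,\ldots,w_k$, each of which, after shifting, is a path from $0$ to $-1$ staying nonnegative until its end --- hence a Lukasiewicz word --- and this decomposition is forced, giving uniqueness. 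An induction on $|w|$ then shows $\Phi$ is onto and one-to-one. The one subtlety to handle carefully is the edge case $f(a) = -1$ (then $k = 0$, the product is empty, and $w$ must be the single letter $a$), and more generally making sure the "last visit to level $j$" argument correctly isolates each subtree; this bookkeeping on the path is the only place where a little care is required, the rest being routine structural induction.
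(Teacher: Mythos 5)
Your route is the same as the paper's: the paper's proof is exactly the prefix-walk bijection with its recursive inverse, stated without verification, and your proposal is that bijection with the verifications filled in. The forward direction (splicing the walks of the subtrees on top of the descending levels after the root step) and the size bookkeeping are fine. The problem is in the one step you yourself flag as delicate: the decomposition of a Lukasiewicz word $w=a\,w_1\cdots w_{f(a)+1}$. Your cut points are wrong as stated. The block boundaries are \emph{first passage} times, not last-visit times: after the initial step to level $f(a)$, block $w_j$ ends at the \emph{first} instant the walk reaches level $f(a)-j$, i.e.\ the cuts are the first hitting times of levels $f(a)-1, f(a)-2,\ldots,-1$ (note also the off-by-one in your list of levels). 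This is what forces uniqueness, since a Lukasiewicz word read from any starting position terminates exactly at the first moment its running sum drops one below its starting level.

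The last-visit rule genuinely fails, because the walk can climb back to level $f(a)$ inside a later block, so no global ``last visit to level $j$'' can isolate the subtrees. Concretely, take $f(a)=-1$, $f(c)=1$ and $w=cacaa$, whose correct decomposition is $c\cdot a\cdot caa$; the heights after each letter are $1,0,1,0,-1$, the last visits to levels $1$ and $0$ are at positions $3$ and $4$, and cutting there produces blocks such as $ac$, which is not even $f$-valid (its weight sum is $0$), let alone Lukasiewicz. With ``last visit'' replaced by ``first passage below the block's starting level,'' your structural induction goes through and the whole argument is a correct, fleshed-out version of the paper's two-line sketch. A minor further point: your two paragraphs use conflicting arities ($f(x)=k$ with $k$ children, then $f(x)=k-1$ with $k$ children); the grammar convention is $f(x)+1$ children and should be used throughout --- the paper's own set-theoretic definition of $\mathcal{T}$ has the same inconsistency, so it is worth fixing explicitly rather than inheriting it.
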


An explicit bijection can be done as follows: from a $\Sigma_f$-labelled tree $T$, a prefix walk gives a word. This word is a $f$-Lukasiewicz word. Conversely, from a 
$f$-Lukasiewicz word $w$, we build a tree recursively, the root is of degree $f(w_1)+1$ and we continue with the sons as a left-first depth course.

\section{A random sampler as a proof of Tutte's theorem \label{sec:alg1}}

This section is devoted to describe the algorithm that we propose for drawing uniformly a rooted planar tree with a given sequence of degree. The diagram (Fig.\ref{diag}) shows the very simple strategy we adopt. 

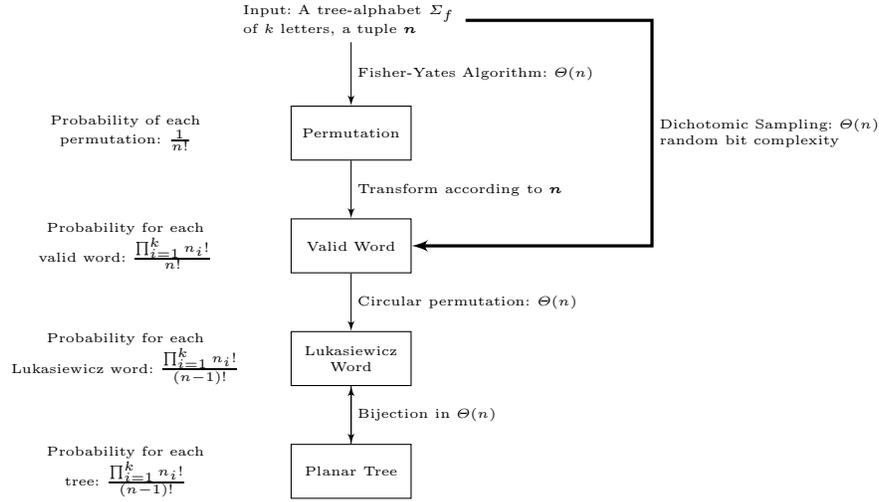
\begin{figure}
{\tiny
\begin{tikzpicture}[auto, node distance=1.5cm,>=latex']

  \node [text width=12em] (start) {Input: A tree-alphabet $\Sigma_f$ of $k$ letters, a tuple $\boldsymbol{n}$};
  
  \node [block, below of=start] (perm) {Permutation};

  %\node [comment, right of=perm, node distance =4cm] (c-perm) {Fisher-Yates Algorithm: $\Theta(n)$};
    \node [comment, left of=perm, node distance =3cm] (p-perm) {Probability of each permutation: $\frac{1}{n!} $};

    \node [block, below of=perm] (vword) {Valid Word};
    \node [comment, left of=vword, node distance =3cm] (p-vword) {Probability for each valid word: $\frac{\prod_{i=1}^k n_i!}{n!} $};

    \node [block, below of=vword] (lword) {Lukasiewicz Word};
    \node [comment, left of=lword, node distance =3cm] (p-lword) {Probability for each Lukasiewicz word: $\frac{\prod_{i=1}^k n_i!}{(n-1)!} $};
 
    \node [block, below of=lword] (tree) {Planar Tree};
    \node [comment, left of=tree, node distance =3cm] (p-tree) {Probability for each tree: $\frac{\prod_{i=1}^k n_i!}{(n-1)!} $};

    \draw [draw,->] (start) --  node[text width=4cm] {Fisher-Yates Algorithm: $\Theta(n)$ } (perm);
    \draw [draw,->, very thick] (start) -- ++(2,0) -- ++(2,0) -- node[text width=3cm] {Dichotomic Sampling: $\Theta(n)$ random bit complexity} ++(0,-3)-- (vword);
    \draw [draw,->] (perm) -- node {Transform according to $\boldsymbol{n}$} (vword);
    \draw [draw,->] (vword) -- node {Circular permutation: $\Theta(n)$} (lword);
    \draw [draw,->] (lword) -- node {Bijection in $\Theta(n)$} (tree);
    \draw [draw,->] (tree) -- node {} (lword);

\end{tikzpicture}
}
\caption{Diagram of the two possible algorithms. The algorithm (Section~\ref{sec:alg1}) using the 
Fisher-Yates algorithm uses $\Theta(n \log n)$ random bits
to generate a random tree with $n$ nodes, but is easy to implement. The algorithm (Section~\ref{sec:dich}) using the Knuth-Yao algorithm~\cite{KY76} or our dichotomic sampling method
use a linear number of random-bit, but doesn't allow us to prove the Tutte's enumerative theorem.}\label{diag}
\end{figure}

The first algorithm contains $4$ steps. The first and the last steps respectively consist in generating a random permutation
using the Fisher-Yates algorithm and the transformation of a Lukasiewicz word into a tree. 
The two other steps are described in the two following subsections.
Each subsection contains an algorithm, the proof of its validity, and its
time and space complexity. We also uses the transformations to obtain enumeration results on each combinatorial object.
Those enumeration results will be useful to prove that the random generator is size-uniform.

\subsubsection{From a permutation to a valid word}
This part is essentially based on the following surjection from permutations to words.
Consider the application $\Phi$ from $\Sigma_n$ the set of permutations of size $n$ to $\mathcal{W}_n$ the words of size $n$ having  for $1\leq i\leq k, n_i$ letters $a_i$ such that :
$$\Phi((\sigma_1,...,\sigma_n))=\phi(\sigma_1)\cdots \phi(\sigma_n)$$ where $\phi(k)=a_i$ if $n_1+\cdots n_{k-1}+1\leq k\leq n_1+\cdots n_{k}$.   
\begin{algorithm}
\KwIn{A tree-alphabet $\Sigma_f$of $k$ letters and a tuple $\boldsymbol{n}$, a permutation $\sigma$ of length $n$}
\KwOut{A tabular $w$ encoding a valid word }

Create a tabular $w$ of size $n$\;
$pos \leftarrow 0$\;
\For{$i \in \{1,\ldots ,k\}$ } {
  \For{$j \in \{1, \ldots, n_i\}$}{
    $w[\sigma_{pos}]\leftarrow a_i$\;
    $pos \leftarrow pos+1$\;
  }
}

\Return{$w $}\;

\caption{From a permutation to a valid word}
\label{algo:vword}
\end{algorithm}

\begin{lemma}
For each valid word $w \in \Sigma_f^{\boldsymbol{n}}$ defined over a $k$ letters alphabet, the number of permutation
associated to $w$ by the Algorithm~\ref{algo:vword} is exactly $\prod_{i=1}^k n_i!$.
\end{lemma}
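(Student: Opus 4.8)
The statement claims that for a fixed valid word $w$ over a $k$-letter tree-alphabet with composition $\boldsymbol{n}=(n_1,\ldots,n_k)$, the number of permutations $\sigma\in\Sigma_n$ with $\Phi(\sigma)=w$ (equivalently, the number of permutations that Algorithm~\ref{algo:vword} transforms into $w$) is exactly $\prod_{i=1}^k n_i!$. The plan is to exhibit a bijection between the set of such permutations and the product of symmetric groups $\Sigma_{n_1}\times\cdots\times\Sigma_{n_k}$, whose cardinality is $\prod_{i=1}^k n_i!$.

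First I would make explicit what it means for $\sigma$ to produce $w$. Tracing Algorithm~\ref{algo:vword}, the variable $pos$ runs through $0,1,\ldots,n-1$, and at step $pos$ the letter written into position $\sigma_{pos}$ is $a_i$, where $i$ is determined by which block $\{n_1+\cdots+n_{i-1}+1,\ldots,n_1+\cdots+n_i\}$ contains $pos$ (up to the index convention). Hence $\Phi(\sigma)=w$ if and only if, for every $i$, the set of positions $\{\sigma_{pos} : n_1+\cdots+n_{i-1} \le pos < n_1+\cdots+n_i\}$ equals the set $P_i := \{\,p : w_p = a_i\,\}$ of positions in $w$ carrying the letter $a_i$. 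In other words, $\sigma$ must map the $i$-th consecutive block of indices onto $P_i$, and $|P_i| = n_i$ since $w$ has exactly $n_i$ occurrences of $a_i$.

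Given that characterization, the bijection is immediate: specifying such a $\sigma$ amounts to choosing, for each $i$, a bijection from the $i$-th block of indices (an $n_i$-element ordered set) onto $P_i$ (also an $n_i$-element set with its natural order), i.e.\ an element of the symmetric group on $n_i$ elements. Different choices of these $k$ bijections give different permutations $\sigma$, and every admissible $\sigma$ arises this way, because the images of the blocks are forced to be exactly the $P_i$. Therefore the fiber $\Phi^{-1}(w)$ is in bijection with $\prod_{i=1}^k \Sigma_{n_i}$, giving the count $\prod_{i=1}^k n_i!$.

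The argument is essentially bookkeeping, so there is no real obstacle; the one point needing care is the off-by-one indexing between the loop variable $pos$ (running from $0$) and the one-based definition of $\phi$ in the map $\Phi$, and the fact that the sets $P_i$ are pairwise disjoint and cover $\{1,\ldots,n\}$ precisely because $w\in\mathcal{W}_n$ has composition $\boldsymbol{n}$ with $\sum n_i = n$. Once these are pinned down, the block-wise decomposition of $\sigma$ is forced and the count follows. I would present it as: define the fiber, observe it splits as a product over letters, identify each factor with a symmetric group, and conclude.
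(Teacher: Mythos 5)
Your proof is correct and follows essentially the same route as the paper: the paper's one-line argument is that $\Phi$ is invariant under permuting the values within each block $[m_i,\ldots,m_i+n_i]$, which is exactly your observation that a preimage of $w$ is determined by choosing, for each letter $a_i$, a bijection from the $i$-th block of indices onto the position set $P_i$. Your version just makes the fiber characterization (blocks forced onto the $P_i$, hence the count $\prod_{i=1}^k n_i!$) explicit, which the paper leaves implicit.
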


\begin{proof}
Let us define $m_i =\sum_{j=1}^{i-1} n_i$ and $m_1=0$. The application is invariant by permutation of the values inside $[m_i, \ldots, m_i+n_i]$. So, the cardinality of the kernel is $\prod_{i=1}^k n_i!$.
\end{proof}

%\begin{proof}
%For all integer $i \in \{2, \ldots, k\}$, we define $m_i =\sum_{j=1}^{i-1} n_i$ and $m_1=0$.
%In Algorithm~\ref{algo:vword}, $[m_i, \ldots, m_i+n_i]$ is the interval of values de $pos$ such that $w[\sigma_{pos}] =a_i$.
%Suppose we have two distincts permutations $\sigma$ and $\pi$ on length $n$. Let $w$ (resp. $v$) be the image of $\sigma$ (resp. $\pi$)
%according to Algorithm~\ref{algo:vword}. For any position $j$, we have :
%$$w[j] = v[j] \Longleftrightarrow \exists x,y\  s.t.\ \sigma_x=j,\ \pi_y=j,\exists i \{1,\ldots, k\} x,y \in [m_i,\ldots , m_i+n_i] $$
%This implies that the order of the values in an interval $[m_i,\ldots , m_i+n_i]$ does not change its image with Algorithm~\ref{algo:vword}.
%Therefore, for a given valid word $w$, the number of preimage is equal to $\prod_{i=1}^k n_i!$.
%\end{proof}

\begin{corollary}\label{cor:vword}
The number of valid words in $\Sigma_f^{\boldsymbol{n}}$ is exactly $ \frac{n!}{\prod_{i=1}^k n_i!}$.
\end{corollary}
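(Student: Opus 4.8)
The plan is to count the pairs $(\sigma, w)$ with $\sigma \in \Sigma_n$ a permutation and $w$ the valid word produced from $\sigma$ by Algorithm~\ref{algo:vword}, in two different ways. On the one hand, Algorithm~\ref{algo:vword} associates to every permutation $\sigma$ exactly one valid word $w \in \mathcal{W}_{\boldsymbol{n}}$ (each tabular entry is filled exactly once, and the output has $n_i$ occurrences of $a_i$ by construction), so the number of such pairs is simply $|\Sigma_n| = n!$. On the other hand, the previous lemma tells us that each fixed valid word $w \in \Sigma_f^{\boldsymbol{n}}$ is the image of exactly $\prod_{i=1}^k n_i!$ permutations. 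Hence, grouping the pairs according to their second coordinate,
\[
n! \;=\; \sum_{w \text{ valid in } \Sigma_f^{\boldsymbol{n}}} \#\{\sigma : \Phi(\sigma) = w\} \;=\; \Bigl(\#\text{valid words in } \Sigma_f^{\boldsymbol{n}}\Bigr)\cdot \prod_{i=1}^k n_i!,
\]
and dividing by $\prod_{i=1}^k n_i!$ gives the claimed formula $\dfrac{n!}{\prod_{i=1}^k n_i!}$.

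The one point that needs a small verification — and which I expect to be the only real obstacle, though a minor one — is the surjectivity claim implicit above: that \emph{every} valid word of shape $\boldsymbol{n}$ actually arises as $\Phi(\sigma)$ for some $\sigma$. This is immediate, since given such a $w$ one can exhibit a preimage directly: read off the positions of each letter $a_i$ in $w$ and let $\sigma$ send the block of indices $[m_i+1, m_i+n_i]$ (in the notation $m_i = \sum_{j<i} n_j$ of the lemma) onto exactly those positions in any order; then $\Phi(\sigma) = w$. Together with the lemma (which handles the fibers), this shows $\Phi$ restricted to $\Sigma_n \to \mathcal{W}_{\boldsymbol{n}}$ is a surjection all of whose fibers have size $\prod_{i=1}^k n_i!$, from which the corollary follows by the division principle.

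Alternatively, one can bypass $\Phi$ entirely and observe that the count is just the classical multinomial coefficient: a word of length $n$ with $n_i$ copies of $a_i$ is determined by an ordered set partition of $\{1,\dots,n\}$ into blocks of sizes $n_1,\dots,n_k$, of which there are $\binom{n}{n_1,\dots,n_k} = \frac{n!}{\prod_{i=1}^k n_i!}$. The double-counting argument above is essentially a reproof of this fact in the language of Algorithm~\ref{algo:vword}, which is the form needed later to track probabilities through the diagram in Figure~\ref{diag}.
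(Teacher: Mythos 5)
Your proof is correct and follows essentially the same route as the paper, which derives the corollary directly from the preceding lemma by observing that $\Phi$ maps the $n!$ permutations onto the valid words with fibers of size $\prod_{i=1}^k n_i!$; your explicit check of surjectivity and the alternative multinomial count are harmless additions to the same argument.
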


\begin{lemma}
The time and space complexity of Algorithm~\ref{algo:vword} is $\Theta(n)$.
\end{lemma}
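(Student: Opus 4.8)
The plan is simply to account for the cost of each block of Algorithm~\ref{algo:vword} and sum. First I would note that allocating the tabular $w$ of size $n$ and returning it each take $\Theta(n)$ time and occupy $\Theta(n)$ space; this already provides the matching lower bounds, so it only remains to check that the rest of the algorithm does not exceed $\Theta(n)$.

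For the time, the essential observation is that the body of the inner loop (the array write $w[\sigma_{pos}]\leftarrow a_i$ together with the increment $pos \leftarrow pos+1$) is executed, over the whole run, exactly $\sum_{i=1}^{k} n_i = n$ times, since for each fixed $i$ it runs $n_i$ times and these counts add up to $n=|\sigma|$. Each such execution is a bounded number of elementary operations (one lookup in $\sigma$, one write in $w$, one integer increment), hence $O(1)$ time, while the outer loop contributes only $O(k)$ bookkeeping. Since $k\le n$ — one may assume every letter actually occurs, i.e. $n_i\ge 1$, a letter with $n_i=0$ being irrelevant — the loops cost $\Theta(n)$, and together with the allocation and the return the total time is $\Theta(n)$.

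For the space, besides the output $w$ the algorithm stores only the counter $pos$, the two loop indices, and read-only access to $\sigma$ and to $\boldsymbol{n}$, i.e. $O(1)$ auxiliary memory, so the total is $\Theta(n)$. There is no real obstacle here: the only point worth making explicit is that the bound is $\Theta(\cdot)$ and not merely $O(\cdot)$ precisely because creating and outputting an array of length $n$ is unavoidable.
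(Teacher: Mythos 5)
Your argument is correct and matches the paper's own proof: both count that the inner-loop body runs exactly $\sum_{i=1}^{k} n_i = n$ times at constant cost per execution, and both obtain linear space from the creation of the tabular $w$ of size $n$. Your extra remarks (the $O(k)$ outer-loop bookkeeping with $k\le n$, and the explicit lower bound justifying $\Theta$ rather than $O$) are fine refinements but do not change the approach.
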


\begin{proof}
The space complexity is linear since we create a tabular of size $n$.
Instructions of line $1,2,5,6$ can be done in constant time. Lines $5$ and $6$ are executed 
$\sum_{i=1}^k n_i$ times, that is to say $n$ times.
\end{proof}

\subsubsection{From a valid word to a Lukasiewicz word}
This part is essentially based on a very simple version of the cyclic lemma which says that among the $n$ circular permutations of a valid word, there is only one which is a  Lukasiewicz word.
%The main idea to this step is that any valid word can be associated to a unique Lukasiewicz word.
%Also, for each Lukasiewicz word $w$ there is exactly $n$ valid words that can be transformed into $w$.
Therefore, if we have a uniform random valid word and transform it into a Lukasiewicz word,
we obtain a uniform Lukasiewicz word.

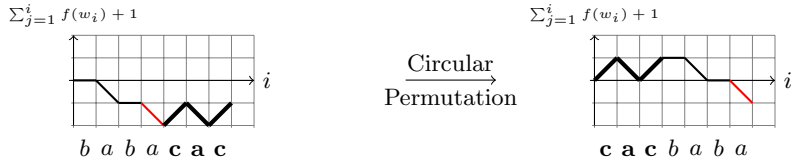
\begin{figure}[h!]
\begin{minipage}{0.4 \textwidth}

\begin{tikzpicture}[scale=0.3]
  \draw[very thin,color=gray] (-0.1,-2.1) grid (8,2);
  \draw[->] (-0.2,0) -- (8,0) node[right] {$i$};
  \draw[->] (0,-2) -- (0,2) node[above] {{\tiny $\sum_{j=1}^i f(w_i)+1$}};
  \draw[thick] (0,0) -- (1,0) ;
  \draw[thick] (1,0) -- (2,-1) ;
  \draw[thick] (2,-1) -- (3,-1) ;
  \draw[thick,color=red] (3,-1) -- (4,-2) ;
  \draw[ultra thick] (4,-2) -- (5,-1) ;
  \draw[ultra thick] (5,-1) -- (6,-2) ;
  \draw[ultra thick] (6,-2) -- (7,-1) ;
  \node (w1)  at (0.5,-3) {$b$}; 
  \node (w2)  at (1.5,-3.1) {$a$}; 
  \node (w3)  at (2.5,-3) {$b$}; 
  \node (w4)  at (3.5,-3.1) {$a$}; 
  \node (w5)  at (4.5,-3.1) {{\bf c}}; 
  \node (w6)  at (5.5,-3.1) {{\bf a}};
  \node (w7)  at (6.5,-3.1) {{\bf c}};

  %\draw[domain=0:3] plot (\x,{square(\x)-1}) node[below right] {$f(x) = x^2$};
\end{tikzpicture}
\end{minipage}
\begin{minipage}{0.15 \textwidth}
\begin{tikzpicture}[scale=0.3]
  \draw[->] (0,0) --  node[above] {Circular} node[below] {Permutation} ++(4,0);
  %\draw[domain=0:3] plot (\x,{square(\x)-1}) node[below right] {$f(x) = x^2$};
\end{tikzpicture}

\end{minipage}
\begin{minipage}{0.4 \textwidth}
\begin{tikzpicture}[scale=0.3]
  \draw[very thin,color=gray] (-0.1,-2.1) grid (8,2);
  \draw[->] (-0.2,0) -- (8,0) node[right] {$i$};
  \draw[->] (0,-2) -- (0,2) node[above] {{\tiny $\sum_{j=1}^i f(w_i)+1$}};
  \draw[ultra thick] (0,0) -- (1,1) ;
  \draw[ultra thick] (1,1) -- (2,0) ;
  \draw[ultra thick] (2,0) -- (3,1) ;
  \draw[thick] (3,1) -- (4,1) ;
  \draw[thick] (4,1) -- (5,0) ;
  \draw[thick] (5,0) -- (6,0) ;
  \draw[thick,color=red] (6,0) -- (7,-1) ;
  \node (w1)  at (0.5,-3.1) {{\bf c}}; 
  \node (w2)  at (1.5,-3.1) {{\bf a}}; 
  \node (w3)  at (2.5,-3.1) {{\bf c}}; 
  \node (w4)  at (3.5,-3) {$b$}; 
  \node (w5)  at (4.5,-3.1) {$a$}; 
  \node (w6)  at (5.5,-3) {$b$};
  \node (w7)  at (6.5,-3.1) {$a$}; 

  %\draw[domain=0:3] plot (\x,{square(\x)-1}) node[below right] {$f(x) = x^2$};
\end{tikzpicture}
\end{minipage}

\caption{An example: the valid word $babacac$ is not a Lukasiewicz word but $cacbaba$ is.
The idea is to find the smallest value of $i$ such that $\sum_{j=1}^i f(w_i)$ is minimal, and compute
the word $w_{i+1}\cdots w_{|w|}w_1\cdots w_i$ \label{fig:path}}
\end{figure}

\begin{lemma}\label{lm:unique}
  For each valid word $w \in \Sigma_f^{\boldsymbol{n}})$, there exists a unique integer $\ell $ such that 
  $w_{\ell+1}\cdots w_n w_1 \cdots w_{\ell }$ is a Lukasiewicz word. 
  Such integer is defined as the smallest integer that minimizes $\sum_{j=1}^{\ell} f(w_j)$.
\end{lemma}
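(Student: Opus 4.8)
The plan is to work with the lattice path encoding introduced earlier, where a word $w$ of length $n$ is mapped to the path with increments $f(w_1),\ldots,f(w_n)$ and partial sums $S_i = \sum_{j=1}^i f(w_j)$, so that $S_0 = 0$ and, by $f$-validity, $S_n = -1$. The key observation is that a cyclic rotation $w_{\ell+1}\cdots w_n w_1\cdots w_\ell$ is a Lukasiewicz word precisely when all of its partial sums stay $\ge 0$ before the final step, i.e.\ when its path stays strictly above $-1$ until the very end; I would translate this condition back in terms of the original partial sums $S_i$.

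First I would compute the partial sums of the rotated word. For $1 \le m \le n-\ell$ the $m$-th partial sum of the rotation is $S_{\ell+m} - S_\ell$, and for $n-\ell < m \le n$ it is $(S_n - S_\ell) + S_{m-(n-\ell)} = S_{m-(n-\ell)} - S_\ell - 1$. So the rotation is a Lukasiewicz word if and only if $S_{\ell+m} - S_\ell \ge 0$ for all $1 \le m \le n-\ell$ and $S_j - S_\ell - 1 \ge 0$ for all $1 \le j \le \ell$; equivalently $S_i \ge S_\ell$ for all $i$ with $\ell < i \le n$, and $S_j > S_\ell$ (strictly) for all $j$ with $1 \le j \le \ell$. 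Since also $S_0 = 0$ and we need $S_0 \ge S_\ell$ resp.\ handled consistently, this is exactly the statement that $\ell$ is the \emph{first} index at which the minimum of $S_0,\ldots,S_{n}$ (equivalently of $S_0,\dots,S_{n-1}$, since $S_n=-1$ is itself a strict minimum only if... ) is attained. I would phrase it cleanly as: $\ell$ works iff $S_\ell \le S_i$ for all $i$ and $S_\ell < S_j$ for all $j < \ell$, i.e.\ $\ell = \min\{\, i : S_i = \min_{0\le t \le n} S_t \,\}$.

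Then existence and uniqueness follow immediately: the finite sequence $S_0,\ldots,S_n$ attains its minimum, and there is a unique smallest index achieving it; call it $\ell$. By the computation above this $\ell$ gives a Lukasiewicz rotation, and conversely any $\ell$ giving a Lukasiewicz rotation must satisfy the two inequalities, which force it to be that same smallest minimizing index. One small point to check is the boundary case $\ell = 0$ versus $\ell = n$: since $S_n = -1 < 0 = S_0$, the index $0$ is never the minimizer, so $\ell \ge 1$; and $\ell = n$ would require $S_n < S_j$ for all $j<n$, which is fine and corresponds to $w$ already being a Lukasiewicz word (rotation by $n$ is the identity). I should make sure the indexing in the lemma statement ($\ell$ minimizing $\sum_{j=1}^\ell f(w_j) = S_\ell$, smallest such) matches; it does.

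The main obstacle, such as it is, is bookkeeping rather than conceptual: being careful about strict versus non-strict inequalities when the rotation wraps around (the ``$-1$'' picked up from $S_n$ is exactly what converts the non-strict condition on the tail into a strict condition on the head), and making sure the edge indices $0$ and $n$ are treated correctly. Everything else is the pigeonhole-flavoured observation that a first minimizer of a finite sequence exists and is unique. I would present the argument in two short steps: (1) the partial-sum reformulation of ``rotation is Lukasiewicz'', and (2) the immediate deduction of existence and uniqueness.
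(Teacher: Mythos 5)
Your proposal is correct and follows essentially the same route as the paper: translate the rotation condition into the partial sums $S_i$, observe that it amounts to $S_i\ge S_\ell$ after position $\ell$ and $S_j>S_\ell$ strictly before it, and conclude that $\ell$ must be the first index attaining the minimum (your explicit existence/uniqueness step and the boundary check at indices $0$ and $n$ are just slightly more detailed than the paper's version). The brief slip where you first wrote the strict inequality for all $j\le\ell$ is harmless, since you immediately restate the correct condition with $j<\ell$.
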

\begin{proof}
Let $w'=w_{\ell+1}\cdots w_n w_1 \cdots w_{\ell }$ be the {\em circular permutation} of $w$ at a position $\ell$.
We notice that $w'$ is a valid word. Let's now picture the path representation of $w$ and $w'$ (see Figure~\ref{fig:path} ).
Let $b(i)$ (resp. ($a(i)$) be the height of the path at position $i$ before (resp. after) the circular permutation. 
In other words:
$$b(i)=\sum_{j=1}^{i} f(w_j)$$ 
$$a(i)=  \begin{cases} b(i)-b(\ell), \text{ for all } i\in \{\ell+1,\ldots ,n\} \\ b(i)-b(\ell)-1, \text{ for all } i\in \{1,\ldots ,\ell\} \end{cases}$$
$w'$ is a Lukasiewicz word iff $a(i)\ge 0$, for all $i\in \{1,\ldots, \ell-1, \ell+1,\ldots ,n\} $, that is to say:
$$a(i) \ge 0 \Longleftrightarrow \begin{cases} b(i)\ge b(\ell), \text{ for all } i\in \{\ell+1,\ldots ,n\} \\ b(i)>b(\ell), \text{ for all } i\in \{1,\ldots ,\ell-1\} \end{cases}$$
This concludes the proof.

\end{proof}

\begin{corollary}
The number of Lukasiewicz words in $\Sigma_f^{\boldsymbol{n}}$ is exactly $ \frac{(n-1)!}{\prod_{i=1}^k n_i!}$.
\end{corollary}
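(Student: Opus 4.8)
The plan is to derive this count directly from Corollary~\ref{cor:vword} together with Lemma~\ref{lm:unique}, by a double-counting (cyclic-lemma) argument. Write $V$ for the set of valid words in $\Sigma_f^{\boldsymbol{n}}$ and $\mathcal{L}$ for the set of Lukasiewicz words in $\Sigma_f^{\boldsymbol{n}}$. By Lemma~\ref{lm:unique}, every $w \in V$ admits exactly one rotation $w_{\ell+1}\cdots w_n w_1 \cdots w_\ell$ that is a Lukasiewicz word; denote it $\Psi(w)$. This gives a map $\Psi : V \to \mathcal{L}$, and I would compute $|V|$ by adding up the sizes of its fibers.

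First I would identify the fibers. Every cyclic rotation of a word in $\Sigma_f^{\boldsymbol{n}}$ has the same letter multiplicities, hence the same value of $\sum_{i=1}^k |\cdot|_{a_i} f(a_i)$, so it is again $f$-valid; in particular every rotation of a given $\ell \in \mathcal{L}$ lies in $V$. If $w$ is any rotation of $\ell$, then $\ell$ is itself a rotation of $w$ which happens to be a Lukasiewicz word, so the uniqueness part of Lemma~\ref{lm:unique} forces $\Psi(w) = \ell$. Conversely, by the definition of $\Psi$, every $w \in \Psi^{-1}(\ell)$ is a rotation of $\ell$. Hence $\Psi^{-1}(\ell)$ is precisely the set of cyclic rotations of $\ell$; in particular $\Psi$ is surjective (take $w=\ell$).

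It remains to show that each fiber has exactly $n$ elements, i.e. that a Lukasiewicz word $\ell$ has $n$ pairwise distinct cyclic rotations, which is the only genuinely delicate point. This follows from primitivity: if $\ell = u^m$ for some word $u$ and some integer $m \ge 2$, then $\sum_{i=1}^k |\ell|_{a_i} f(a_i) = m \sum_{i=1}^k |u|_{a_i} f(a_i)$, so the validity condition $\sum_{i=1}^k |\ell|_{a_i} f(a_i) = -1$ would force $m \mid 1$, a contradiction. Thus $\ell$ is primitive, so its $n$ rotations are distinct, and $|\Psi^{-1}(\ell)| = n$ for every $\ell \in \mathcal{L}$. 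Summing over $\ell$ gives $|V| = n\,|\mathcal{L}|$, and combining this with $|V| = n!/\prod_{i=1}^k n_i!$ from Corollary~\ref{cor:vword} yields $|\mathcal{L}| = (n-1)!/\prod_{i=1}^k n_i!$, as claimed. As indicated, the only real obstacle is the primitivity argument guaranteeing that each fiber has full size $n$ rather than a proper divisor of $n$; the rest is bookkeeping on top of the two quoted results.
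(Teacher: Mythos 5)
Your proof is correct and follows essentially the same route as the paper's: by Lemma~\ref{lm:unique} each Lukasiewicz word arises from exactly $n$ valid words, so dividing the count $n!/\prod_{i=1}^k n_i!$ of Corollary~\ref{cor:vword} by $n$ gives the claim. The only addition is your explicit check that the $n$ rotations of a Lukasiewicz word are pairwise distinct (primitivity, since the $f$-weights summing to $-1$ rule out proper powers), a point the paper's two-line proof leaves implicit.
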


\begin{proof}
From Lemma~\ref{lm:unique} we know that each Lukasiewicz word can be obtained from exactly $n$ valid words.
We conclude using Corollary~\ref{cor:vword}
\end{proof}

\begin{corollary}[Tutte]
The number of trees having $n_i$ of type $i$ and such that $(n_1,...,n_k)$ is $f$-valid is exactly $ \frac{(n-1)!}{\prod_{i=1}^k n_i!}$.
\end{corollary}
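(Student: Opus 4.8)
The plan is to read off the count directly from the Lukasiewicz bijection and the preceding corollary, which already gives that the number of $f$-Lukasiewicz words of size $\boldsymbol{n}$ equals $\frac{(n-1)!}{\prod_{i=1}^k n_i!}$. First I would recall the explicit bijection furnished by the Lukasiewicz theorem: a $\Sigma_f$-labelled tree $T\in\mathcal{T}$ is sent to the word produced by its prefix walk, and conversely an $f$-Lukasiewicz word is decoded into a tree by the left-first depth-first construction (root of degree $f(w_1)+1$, then recurse on the sons). The one point that genuinely needs to be checked is that this bijection is compatible with the multidimensional size function, i.e. that it preserves the refined statistic $\boldsymbol{n}=(n_1,\ldots,n_k)$ and not merely the total length $n$.

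Next I would verify size-preservation letter by node. During the prefix walk each node of $T$ is visited exactly once and emits exactly one letter, namely the letter $a_i$ carrying its label; hence a tree with $n_i$ nodes of type $i$ is mapped to a word $w$ with $|w|_{a_i}=n_i$ for every $i$, and the inverse construction reverses this correspondence node for letter. Therefore the bijection of the theorem restricts to a bijection between $\mathcal{T}_{\boldsymbol{n}}$ and the set of $f$-Lukasiewicz words $w$ with $|w|_{a_i}=n_i$ for all $i$. Moreover the hypothesis that $(n_1,\ldots,n_k)$ is $f$-valid, namely $\sum_{i=1}^k n_i f(a_i)=-1$, is exactly condition ii.\ in the definition of an $f$-Lukasiewicz word once one substitutes $|w|_{a_i}=n_i$; so the family of Lukasiewicz words of size $\boldsymbol{n}$ is nonempty for precisely these tuples, and the restriction is meaningful.

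It then only remains to chain the two facts: the number of trees with $n_i$ nodes of type $i$, for an $f$-valid tuple, equals the number of $f$-Lukasiewicz words in $\Sigma_f^{\boldsymbol{n}}$, which by the previous corollary equals $\frac{(n-1)!}{\prod_{i=1}^k n_i!}$ with $n=\sum_{i=1}^k n_i$. I do not expect a real obstacle here: the statement is essentially a one-line consequence of the bijection already established, and the only step deserving a sentence of care is confirming that the prefix-walk encoding respects the size in the fine, letter-by-letter sense rather than only in total length.
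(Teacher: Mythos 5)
Your proposal is correct and follows the same route as the paper, which also deduces the count directly from the bijection between trees and Lukasiewicz words combined with the preceding corollary on the number of Lukasiewicz words. Your additional check that the prefix-walk encoding preserves the letter-by-letter size vector $\boldsymbol{n}$ is a welcome precision, but it is not a different argument.
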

\begin{proof}
It is a direct consequence of the bijection between trees and Lukasiewicz words.
\end{proof}

We use the property of Lemma~\ref{lm:unique} to describe an algorithm that transforms any valid word into its associated
Lukasiewicz word.

\begin{algorithm}
\KwIn{A valid word $w$ of length $n$ according to $(\Sigma,f,occ)$}
\KwOut{A tabular $v$ encoding a Lukasiewicz word }

$min \leftarrow cur \leftarrow f(w_1)$\;
$\ell \leftarrow 1$\;

\For{$i \in \{2, \ldots, n\}$}{
  $cur \leftarrow cur + f(w_i)$\;
  \If{$cur < min$}{
    $\ell \leftarrow i$\;
    $min \leftarrow cur$\;
  }
}

Create a tabular $v$ of length $n$\;
\For{$i \in \{1, \ldots, \ell\}$}{
  $v[i+\ell+1]\leftarrow w[i]$\;
}
\For{$i \in \{\ell+1, \ldots, n\}$}{
  $v[i-\ell-1]\leftarrow w[i]$\;
}

\Return{$v $}\;

\caption{From a valid word to a Lukasiewicz word}
\label{algo:lword}
\end{algorithm}

\begin{lemma}
Algorithm~\ref{algo:lword} transforms a valid word into its Lukasiewicz word.
Its time and space complexity is $\Theta(n)$.
\end{lemma}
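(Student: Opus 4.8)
The plan is to split the statement into its correctness part and its complexity part; both reduce to bookkeeping once Lemma~\ref{lm:unique} is available. First I would set up the loop invariant for the first \texttt{for} loop: after the iteration with index $i$, the variable $cur$ equals $\sum_{j=1}^{i} f(w_j)$, the variable $min$ equals $\min_{1\le j\le i}\sum_{j'=1}^{j} f(w_{j'})$, and $\ell$ is the \emph{smallest} index in $\{1,\dots,i\}$ at which that running minimum is attained. The first two clauses are immediate inductions on $i$; the third uses the fact that $\ell$ is reassigned only on a strict decrease ($cur<min$), so a later index that merely ties the current minimum never overwrites $\ell$. At termination ($i=n$) this produces exactly the integer $\ell$ characterized in Lemma~\ref{lm:unique} as the smallest minimizer of $\sum_{j=1}^{\ell} f(w_j)$.

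Next I would verify that the two remaining loops fill $v$ with precisely the circular permutation $w' = w_{\ell+1}\cdots w_n\,w_1\cdots w_\ell$ occurring in Lemma~\ref{lm:unique}. The loop over $i\in\{\ell+1,\dots,n\}$ copies $w_{\ell+1},\dots,w_n$ into the first $n-\ell$ cells of $v$ in order, and the loop over $i\in\{1,\dots,\ell\}$ copies $w_1,\dots,w_\ell$ into the last $\ell$ cells; the only thing to check is that the index shifts are consistent with the array indexing convention used in the pseudocode, so that each cell of $v$ is written exactly once and no index falls outside $\{1,\dots,n\}$ (equivalently $\{0,\dots,n-1\}$). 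Since $w'$ is a circular permutation of $w$, Lemma~\ref{lm:unique} then gives directly that $v$ is a Lukasiewicz word, and moreover that it is the unique one obtainable from $w$ by a circular shift; hence the algorithm indeed returns the Lukasiewicz word associated to $w$.

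For the complexity, allocating the length-$n$ table costs $\Theta(n)$ time and space. The first loop runs $n-1$ iterations, each performing a constant number of additions, comparisons and assignments; the two copy loops run $\ell$ and $n-\ell$ iterations respectively, again with constant work per iteration, for a total of $n$. So the running time is $O(n)$, and since the algorithm must at least read its length-$n$ input and write its length-$n$ output, it is also $\Omega(n)$; the space usage is $\Theta(n)$ for the same reason. I do not expect a genuine obstacle here: the only delicate point is the off-by-one index arithmetic in the two copy loops, which has to be matched carefully against the chosen ($0$- versus $1$-based) indexing convention so that the written array really is $w_{\ell+1}\cdots w_n\,w_1\cdots w_\ell$.
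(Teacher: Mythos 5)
Your proof is correct and follows essentially the same route as the paper's: the first loop computes the smallest index $\ell$ minimizing the partial sums (so Lemma~\ref{lm:unique} applies), the two copy loops realize the circular shift $w_{\ell+1}\cdots w_n w_1\cdots w_\ell$, and the linear time and space bounds are immediate bookkeeping. You simply spell out the loop invariant and the index arithmetic in more detail than the paper does, which is a harmless refinement rather than a different argument.
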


\begin{proof}
The space complexity is linear since we create a tabular $v$ of size $n$.
The first loop computes unique integer $\ell$ such that $w_{\ell+1}\cdots w_n w_1 \cdots w_{\ell }$ is a Lukasiewicz word, in linear time.
The second and the third loop fill the tabular $v$ of length $n$ such that $v=w_{\ell+1}\cdots w_n w_1 \cdots w_{\ell }$.
\end{proof}

\subsection{First algorithm}\label{LtoT}

\begin{theorem}
Algorithm~\ref{algo:final} is a random planar tree generator. 
Its time and space arithmetic complexity is linear.
%Assume that the number of different types of nodes is $k(n)$, then its time (in terms of numbers of  random bits) complexity is $\Theta(n\ln(k(n)))$. The space complexity is linear.
\end{theorem}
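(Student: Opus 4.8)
The statement has two parts --- that Algorithm~\ref{algo:final} is a size-uniform generator and that it runs in linear arithmetic time and space --- and I would establish them in that order.

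\textbf{Uniformity.} The plan is to write the algorithm as the composition of its four phases (Fisher--Yates shuffle; permutation $\to$ valid word; valid word $\to$ Lukasiewicz word; Lukasiewicz word $\to$ tree, as in Fig.~\ref{diag}) and to track the output distribution through each phase, using repeatedly the elementary fact that a map all of whose fibers have the same cardinality pushes the uniform distribution forward to the uniform distribution. Assume first, as one must for the target class to be nonempty, that $\boldsymbol{n}$ is $f$-valid. Phase~1: the Fisher--Yates shuffle is the classical algorithm returning each of the $n!$ permutations of $\{1,\dots,n\}$ with probability $1/n!$ (immediate induction on the number of transpositions). Phase~2: by the lemma stating that each valid word of size $\boldsymbol{n}$ has exactly $\prod_{i=1}^k n_i!$ preimages under Algorithm~\ref{algo:vword}, after Phase~2 every valid word of size $\boldsymbol{n}$ is produced with the common probability $\prod_{i=1}^k n_i!/n!$. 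Phase~3: by Lemma~\ref{lm:unique} and its corollary, Algorithm~\ref{algo:lword} maps valid words onto $f$-Lukasiewicz words exactly $n$-to-one, so after Phase~3 every $f$-Lukasiewicz word of size $\boldsymbol{n}$ is produced with probability $n\cdot\prod_{i=1}^k n_i!/n! = \prod_{i=1}^k n_i!/(n-1)!$. Phase~4: the prefix-walk map of the Lukasiewicz theorem is a bijection between $f$-Lukasiewicz words of size $\boldsymbol{n}$ and $\Sigma_f$-labelled trees of size $\boldsymbol{n}$, so each such tree is returned with probability $\prod_{i=1}^k n_i!/(n-1)!$, which does not depend on the tree. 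This is exactly the size-uniformity of the definition of Section~\ref{sec:def}, and the value is consistent with the Tutte count $(n-1)!/\prod_{i=1}^k n_i!$.

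\textbf{Complexity.} Here I would simply add the four phases. Fisher--Yates uses an array of length $n$ and performs $\Theta(n)$ arithmetic operations; Algorithms~\ref{algo:vword} and~\ref{algo:lword} are $\Theta(n)$ in time and space by the lemmas already proved; and decoding an $f$-Lukasiewicz word of length $n$ into its tree, carried out recursively as a left-first depth-first traversal, touches each letter once and is therefore again $\Theta(n)$ in time and space. Summing, the generator has linear arithmetic time and linear space. One should stress that ``arithmetic complexity'' counts word operations, not random bits: the bit complexity is $\Theta(n\log n)$ because Phase~1 draws a near-uniform integer in $\{1,\dots,i\}$ at its $i$-th step, and it is precisely this that Section~\ref{sec:dich} improves.

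\textbf{The delicate point.} The only step where care is needed is the ``exactly $n$-to-one'' claim used in Phase~3: for uniformity each $f$-Lukasiewicz word must have precisely $n$ valid preimages, not merely at most $n$. This holds because a valid word $w$ is automatically primitive --- if $w=u^d$ with $d\ge 2$ then $\sum_{i} |w|_{a_i} f(a_i) = d\sum_{i} |u|_{a_i} f(a_i)$ would be a multiple of $d$ and so could not equal $-1$ --- hence its $n$ cyclic shifts are pairwise distinct words of length $n$, and by Lemma~\ref{lm:unique} these are exactly the $n$ valid words that the circular-permutation step carries to a given Lukasiewicz word. Everything else is a routine chaining of the lemmas above.
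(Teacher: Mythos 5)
Your proof is correct and follows essentially the same route as the paper: chaining the four phases of Fig.~\ref{diag} and the accompanying lemmas (Fisher--Yates uniformity, the $\prod_i n_i!$-to-one map onto valid words, the $n$-to-one cyclic-permutation step from Lemma~\ref{lm:unique}, and the Lukasiewicz bijection), then summing the linear costs of the four phases. Your primitivity argument justifying that the cyclic shifts of a valid word are pairwise distinct is a welcome extra precision that the paper leaves implicit in its corollary counting Lukasiewicz words.
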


\begin{algorithm}
\KwIn{A tree-alphabet $\Sigma_f$ of $k$ letters and a tuple $\boldsymbol{n}$}
\KwOut{A random planar tree satisfying $\Sigma_f$ and $\boldsymbol{n}$ }

Generate a random permutation $\sigma$ using Fisher-Yates Algorithm\;
Transform $\sigma$ into a valid word $w$\;
%Generate a valid word $w$ using dichotomic sampling algorithm \;
Transform $w$ into a Lukasiewicz word $v$\;
Transform $v$ in a planar tree $t$

\Return {$t$}\;
\caption{Random Planar Tree Generator}
\label{algo:final}
\end{algorithm}

\section{The dichotomic sampling method \label{sec:dich}}

Using the diagram of Figure~\ref{diag} above, we arrive at the algorithm \ref{algo:final}. However, this algorithm is not optimal in the number of random bits because drawing the permutation consumes more bits than necessary. We shall describe another method to generate valid words more efficiently. The problem is just to draw a $f$-valid word from a $f$-valid tuple $\boldsymbol{n}=(n_1,\ldots,n_k)$. For that purpose, consider the random variable $A$ on the letters of $\Sigma$, assume that $A_1$ follows the distribution $D_{\boldsymbol{n}}$: $Prob(A_1=a_i)=\dfrac{n_i}{\sum_i n_i}$, draw $A_1$ (says $A_1=a_j$) and put it in the first place in the word (i.e. $w_1=a_j$). Now, $A_2$ is conditioned by $A_1$, just by decrease by one $n_j$, again draw $A_2$ and put it in the second place, and so on. This algorithm is described below (see Algorithm~\ref{algo:draw1}).  It is clear that the built word is a $f$-valid word, because it contains exactly the good number of each letters. Now, it is drawn uniformly, indeed, in a uniform $f$-valid word, the first letter follows exactly the distribution $D_{\boldsymbol{n}}$, the sequel follows directly by induction.

\begin{algorithm}
\KwIn{A tree-alphabet $\Sigma_f$ of $k$ letters and a tuple $\boldsymbol{n}$}
\KwOut{A tabular $w$ encoding a valid word }

Create a tabular $w$ of size $n$\;
%$s \leftarrow \sum n_i$\;
\For{$i \in \{1,\ldots ,n\}$ } {
  $k\leftarrow Distrib(\boldsymbol{n})$ ($k$ is drawn according to the distribution $D_{\boldsymbol{n}}$)\;
    $w[i] \leftarrow a_k$\;
    $\boldsymbol{n} \leftarrow \boldsymbol{n}-\boldsymbol{e}_k$ ($\boldsymbol{e}_k$ denotes the $k$-th canonical vector)\;
}

\Return{$w $}\;

\caption{From a tuple $\boldsymbol{n}$ to a valid word}
\label{algo:draw}
\end{algorithm}

So, to obtain a random-bit optimal sampler, we just need to have a optimal sampler for general discrete distribution. But, it is exactly the result obtained by Knuth-Yao~\cite{KY76}. Therefore we have the following result:

\begin{theorem}
By replacing the two first steps of Algorithm~\ref{algo:final} by Algorithm~\ref{algo:draw1}, one obtains
a random-bit optimal sampler for rooted planar tree with a given sequence of degree. 
\end{theorem}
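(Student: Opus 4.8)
The plan is to assemble the claim from pieces that are already in place: the only new input needed is a precise statement and correctness of the Knuth--Yao principle for a fixed discrete distribution, plus a bound on its expected random-bit consumption, and then to sum the per-step costs along the pipeline. First I would recall the Knuth--Yao result: for a distribution $p=(p_1,\ldots,p_k)$ on $k$ symbols, there is a DDG-tree (discrete distribution generating tree) whose leaves are labelled by symbols, such that reading fair coin flips and walking down the tree outputs symbol $a_i$ with probability exactly $p_i$, and the expected number of flips is at most $H(p)+2$, where $H$ is the binary entropy. This is optimal up to the additive constant, and in particular is never more than $\log_2 k + 2$. I would state this as a black box, citing~\cite{KY76}, since the excerpt already defers to it.

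Next I would argue correctness of Algorithm~\ref{algo:draw} (the dichotomic sampler), which the text sketches: at step $i$ the remaining multiset of letters is described by the current tuple $\boldsymbol{n}^{(i)}$, and drawing $w_i=a_k$ with probability $n_k^{(i)}/\sum_j n_j^{(i)}$ reproduces, by an immediate induction on $i$, the uniform distribution on words with letter-content $\boldsymbol{n}$: conditioned on the prefix $w_1\cdots w_{i-1}$, a uniform such word has its next letter equal to $a_k$ with exactly that probability, because the number of completions is a multinomial coefficient in the residual counts. Composing with Corollary~\ref{cor:vword}, every $f$-valid word is produced with probability $\prod_i n_i!/n!$, i.e. uniformly. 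Then Lemma~\ref{lm:unique} (cyclic lemma) turns a uniform valid word into a uniform Lukasiewicz word deterministically, and the bijection of the Lukasiewicz theorem turns that into a uniform $\Sigma_f$-labelled tree with the prescribed degree sequence; neither of these last two steps consumes any randomness. Hence the modified algorithm is a size-uniform generator.

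For the random-bit count: each of the $n$ calls to $Distrib(\boldsymbol{n}^{(i)})$ draws one letter from a distribution on at most $k$ symbols, so by Knuth--Yao its expected cost is at most $\log_2 k+2 = O(1)$ for fixed alphabet (and more sharply $H(\boldsymbol{n}^{(i)}/|\boldsymbol{n}^{(i)}|)+2$), and by linearity of expectation the total expected number of fair bits is $O(n)$. Since any sampler must read at least $\log_2(\text{number of trees}) = \log_2\big((n-1)!/\prod_i n_i!\big) = \Theta(n)$ bits on average (for a non-degenerate degree sequence), by the Knuth--Yao lower bound, $\Theta(n)$ is optimal up to a multiplicative constant; this matches the claim. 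The arithmetic/time and space complexity remains linear, since the deterministic post-processing is linear by the lemmas above and each $Distrib$ call can be serviced in amortized constant time with a suitable data structure (e.g. a Fenwick tree or a two-level bucket structure over the $n_i$), though I would only sketch this, as the theorem as stated speaks of random-bit optimality.

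The main obstacle I anticipate is making the Knuth--Yao invocation fully rigorous in this setting: one must be slightly careful that the bit-cost bound $H+2$ applies per-call with the \emph{current} residual distribution, that the DDG-tree can be recomputed cheaply as $\boldsymbol{n}$ changes (or that one argues directly without materializing it), and that summing conditional entropies telescopes correctly — indeed $\sum_i H(\boldsymbol{n}^{(i)})$ is comparable to $\log_2\big(n!/\prod_i n_i!\big)$ by the chain rule, which is what ties the upper bound to the entropy lower bound and yields genuine near-optimality rather than merely $O(n\log k)$. Everything else is bookkeeping over results already established in the excerpt.
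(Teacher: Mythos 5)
Your proposal is correct and takes essentially the same route as the paper: uniform sequential generation of the valid word from the residual distribution $D_{\boldsymbol{n}}$, the deterministic cyclic-lemma and Lukasiewicz-bijection steps already proved in Section~\ref{sec:alg1}, and the Knuth--Yao optimal discrete-distribution sampler for each letter draw. The paper in fact offers no more than this (the theorem is asserted directly from the preceding discussion and the citation of Knuth--Yao), so your induction for uniformity, the per-call $H+2$ bit bound, and the $\log_2$ of Tutte's count as the entropy lower bound merely make explicit details the authors leave implicit.
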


Nevertheless, according to the authors, the Knuth-Yao algorithm can be inefficient in practice (because it needs to solve the difficult question to generate infinite DDG-trees). There is a long literature on it which is summarized in the book of L. Devroye~\cite{D86}. Let just mention the interval sampler from \cite{H97} and the alias methods \cite{V91,W77,MTW04}.

We propose in the sequel a nearly optimal and very elementary algorithm, called \emph{dichotomic sampling}, 
to draw a random variable $X$ following a given discrete distribution of $k$ parts, say, $Prob(X=x_i)=p_i$ for $1\leq i\leq k$. 
%% Let us denote $s_j=\sum_{i\leq j}p_i$ and let us express $X$ as $[s_1,\cdots,s_{k-1}]$. 
%% Now, let $m$ be the largest integer such that $s_{m}\leq 1/2$ and $m=0$ if there is no such a number. 
%% Then to draw $X_1=[s_1,\ldots,s_{k-1}]$, we can flip a coin (Bernoulli choice of parameter 1/2), and if the result is heads then 
%% if $m=0$ return $x_0$ or draw in $X_2=[2s_1,\ldots,2s_{m},1]$, if the result is tails and if $m=k-1$ then return $x_k$ else draw in 
%% $X_2=[2s_{m+1}-1,\ldots,2s_{k-1}-1]$ and continue recursively while $X_\mu\neq\emptyset$
% then assume that $X_{\mu-1}=[f(s_p),\ldots,f(s_q)]$ (for some affine function $f$) and return $x_p$ if we have chosen the left and $x_q$ otherwise.

\begin{algorithm}
\KwIn{a tuple $\boldsymbol{n}=(n_1,\ldots,n_k)$ such that $n=\sum_{i=1}^k n_i$}
\KwOut{An integer between $1$ and $k$}
$i\leftarrow 1$\;
$j\leftarrow k$\;
$min \leftarrow 0$\;
$max \leftarrow n$\;
\While{$i \neq j$}{
\If{DrawRandomBit is equal to $1$}{
  $tmp \leftarrow min$\;
  $min  \leftarrow \frac{min+max}{2}$\;
  \While{$min>(tmp+n_i)$}{
    $i\leftarrow i+1$\;
    $tmp \leftarrow tmp+1$\;
  }
}
\Else{
 $tmp \leftarrow max$\;
  $max  \leftarrow \frac{min+max}{2}$\;
  \While{$max<(tmp-n_i)$}{
    $j\leftarrow j-1$\;
    $tmp \leftarrow tmp-n_j$\;
  }

}
}
\Return{$i $}\;

\caption{Dichotomic sampling}
\label{algo:draw1}
\end{algorithm}

The dichotomic sampling algorithm implies the following induction for $C_n$ the mean number of flip needed for drawing when there are $n+1$ parts~: $C_1=2$ and $C_k=1+\frac{1}{2}\max_{0\leq k \leq m}(C_m+C_{k-m})$. First, let us assume that $C_k$ is concave, so let us consider $\tilde{C}_k=1+\frac{1}{2}({\tilde{C}_{\lfloor\frac{k}{2}\rfloor}+\tilde{C}_{\lceil\frac{k}{2}\rceil}})$. A short calculation shows that $\tilde{C}_{n}=\lfloor\ln_2(n-1)\rfloor+1+\dfrac{n}{2^{\lfloor\ln_2(n-1)\rfloor}}$. Now, by induction, we can easy check that $C_k=\tilde{C}_k$.  So, in particular, $C_k\leq 2+\ln_2(k)$.

\begin{figure}[htb]
\begin{center}
\includegraphics[scale=0.6]{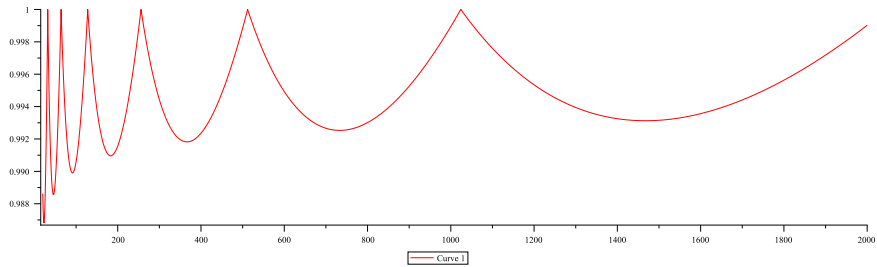} 
\caption{Graphic for Mean Cost $\dfrac{C_k}{2+\ln_2(k)}$}
\label{fig:cout}
\end{center}
\end{figure}

 Note that the sequence $C_k$ can also be analyzed by classical Mellin transform techniques and the periodic phenomena we show in the figure \ref{fig:cout} is quite familiar.

\section{Simulate-Guess-and-prove : Analysis of height}

In this section, we study experimentally and theoretically the height of random Motzkin trees (unary-binary) when the proportion of unary nodes fluctuates.

Figure~\ref{fig:rte} shows example of random Motzkin trees  generated with the algorithm from Section~\ref{sec:alg1}, with different
proportions of unary nodes. Figure~\ref{fig:curv-height} shows the evolution of the height of trees when one increases the proportions of unary nodes.

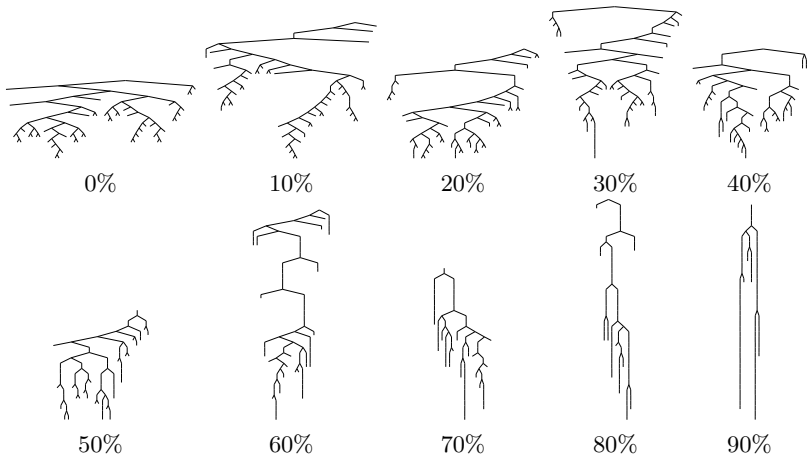
\begin{figure}
\begin{tabular}{c c c c c}
\begin{tikzpicture}[sibling distance=1pt,level distance=10pt,scale=0.2] \Tree [ [ {} [ [ [ {} [ [ [ [ [ [ {} [ {} {} ]]{} ][ [ {} {} ][ {} {} ]]][ [ {} [ [ {} [ {} [ [ {} {} ]{} ]]]{} ]][ [ {} [ {} {} ]][ {} {} ]]]]{} ][ {} {} ]]]{} ][ [ [ [ {} [ {} {} ]][ {} {} ]][ {} [ {} [ [ [ [ [ {} {} ]{} ]{} ]{} ][ {} [ [ {} [ {} {} ]]{} ]]]]]][ {} [ [ [ {} {} ]{} ]{} ]]]]][ [ {} {} ]{} ]]\end{tikzpicture} &
\begin{tikzpicture}[sibling distance=1pt,level distance=10pt,scale=0.2] \Tree [ [ [ [ [ [ [ {} ]][ {} [ [ [ {} [ [ [ [ {} [ [ [ [ {} ]{} ][ {} {} ]]{} ]][ {} {} ]]{} ]{} ]]][ [ [ {} {} ]{} ][ [ {} {} ][ {} [ [ [ [ [ [ [ [ [ [ {} [ [ {} [ [ {} [ {} [ {} {} ]]]{} ]][ {} ]]][ {} {} ]]{} ][ {} {} ]]{} ]{} ]{} ]][ {} [ {} [ [ [ {} [ {} [ {} {} ]]]]]]]][ [ {} {} ]]]]]]]]]{} ]]{} ]{} ]\end{tikzpicture} &
\begin{tikzpicture}[sibling distance=1pt,level distance=10pt,scale=0.2] \Tree [ [ [ [ [ [ [ [ [ [ {} {} ]{} ]]{} ]][ [ [ [ [ [ [ [ {} [ {} [ [ [ [ {} [ [ [ {} ]]{} ]][ [ [ [ {} {} ]{} ]{} ]]][ {} [ {} {} ]]]{} ]]][ [ [ [ [ [ [ {} ][ [ [ {} ]][ {} {} ]]]][ [ {} [ {} ]][ [ [ {} {} ]][ {} {} ]]]]][ {} ]][ {} [ {} {} ]]]]{} ][ [ {} {} ]]][ [ [ {} {} ]]]]{} ]]]]][ {} ]]{} ][ {} {} ]]\end{tikzpicture} &
\begin{tikzpicture}[sibling distance=1pt,level distance=10pt,scale=0.2] \Tree [ [ [ {} [ {} [ [ {} ][ {} ]]]]][ [ [ [ [ [ [ [ {} [ [ [ [ {} [ {} [ [ [ [ {} [ [ {} ][ [ [ {} [ [ {} ]]][ {} [ [ [ [ [ [ [ {} ]]]]]]]]]]]]{} ]{} ][ {} ]]]]][ [ [ [ [ {} {} ][ {} [ [ {} ][ [ [ [ [ [ {} ][ {} ]]]{} ]{} ][ {} {} ]]]]][ [ [ {} [ [ {} [ [ {} ]]]]]]]]{} ]]]{} ]]][ {} ]]{} ]]{} ][ {} ]]{} ]]\end{tikzpicture} &
\begin{tikzpicture}[sibling distance=1pt,level distance=10pt,scale=0.2] \Tree [ [ [ [ {} [ [ [ {} {} ][ [ [ [ [ [ {} ]{} ]{} ]{} ]][ [ [ [ [ [ [ [ [ [ {} ]]][ [ {} ]]]]]][ [ [ [ {} ]][ [ [ [ [ [ {} ]][ [ {} ][ [ {} ][ [ {} ][ [ {} ]]]]]][ {} ]]]{} ]]]]]{} ]]][ [ [ [ [ [ [ [ [ [ {} ][ [ {} {} ]]]][ {} [ {} {} ]]]]][ [ [ [ [ {} ]]{} ][ [ {} ]]]{} ]]{} ]]]]]]][ [ [ {} ]][ [ {} ]]]]\end{tikzpicture} \\
0\% & 10\% & 20\% & 30\% & 40\% \\
\begin{tikzpicture}[sibling distance=1pt,level distance=10pt,scale=0.2] \Tree [ [ [ [ [ [ [ {} [ [ [ [ [ [ [ [ {} [ [ [ [ {} ][ [ [ {} ][ {} ]]]]]]]]]]][ [ [ [ [ {} [ [ {} {} ]]]]][ [ [ [ {} [ {} ]]]{} ]]]]][ [ [ [ [ [ [ [ [ {} {} ]][ [ [ [ [ {} ]]]][ [ {} [ [ {} ]]]]]]]]][ [ [ [ [ {} ]]]]]]]]]]][ [ [ [ {} [ [ [ [ {} ]]]]]]][ [ {} {} ]]]][ {} [ {} ]]][ [ {} ]]]][ [ {} [ {} ]]]]]\end{tikzpicture} &
\begin{tikzpicture}[sibling distance=1pt,level distance=10pt,scale=0.2] \Tree [ [ [ [ [ [ [ {} ]]][ [ [ {} ]][ [ [ [ [ [ [ [ [ [ [ [ {} ][ [ [ [ [ [ [ [ [ [ [ [ {} ]]][ [ [ [ {} [ [ [ [ [ [ [ {} [ [ [ [ {} ]]]]]]][ [ {} ]]]][ {} ]]{} ]]{} ]][ [ [ [ [ {} {} ][ [ [ [ [ [ {} ]]]]]]]]][ [ [ {} ]]]]]][ [ [ [ [ [ {} ]]]]]]][ {} ]]]]]]]]]]]]]][ [ {} ]]]]]]]]][ {} ]]{} ][ [ [ [ {} ]]]]]\end{tikzpicture} &
\begin{tikzpicture}[sibling distance=1pt,level distance=10pt,scale=0.2] \Tree [ [ [ [ [ [ [ [ [ [ [ {} ]]]]]]]]][ [ [ [ [ [ [ [ [ [ [ [ [ [ {} ]]]]]][ [ [ [ {} ]][ [ [ [ [ [ [ {} ]]]]]][ [ [ [ [ [ {} ]]]{} ]]]]]]][ [ [ [ [ {} ]][ [ [ [ [ [ [ [ [ {} ]]][ [ [ [ [ [ [ [ [ [ [ {} ]]]]]]]]]]]]]][ [ {} ]]]][ [ [ [ {} [ [ [ [ {} ][ [ [ {} ][ [ [ [ [ {} ]]]]]]]]{} ]]]]]{} ]]]]]]]]]]]]]]\end{tikzpicture} &
\begin{tikzpicture}[sibling distance=1pt,level distance=10pt,scale=0.2] \Tree [ [ {} ][ [ [ [ [ [ [ [ [ [ {} ]][ [ [ [ [ [ [ [ [ [ [ [ [ [ [ [ [ [ {} ]]]][ [ [ [ {} ]]]]]]]]]][ [ [ [ [ [ [ [ [ [ [ [ [ [ [ [ [ {} ]]]]]]]]]][ [ [ [ {} [ [ [ [ [ [ [ [ [ {} ]]]]]]][ [ {} ]]]]]]][ [ [ [ [ [ [ [ [ [ [ [ [ [ [ [ [ {} ]]]]]][ [ [ [ {} ]]]]]]]]]]]]]]]]]]]]]]]]]]]]]]]]][ [ [ {} ]]]]]]]]]]\end{tikzpicture} &
\begin{tikzpicture}[sibling distance=1pt,level distance=10pt,scale=0.2] \Tree [ [ [ [ [ [ [ [ [ [ [ [ [ [ [ [ [ [ [ [ [ [ [ [ [ [ [ [ [ [ [ [ [ [ [ [ [ [ [ {} ]]]]]]]]]]]]]]]]]]]]]]]]][ {} ]]]]]]]]][ [ [ [ [ {} ]][ [ [ [ [ [ {} ]]]]]]]]]][ [ [ [ [ [ [ [ [ [ [ [ [ [ [ [ [ [ [ [ [ [ [ [ [ [ [ [ [ [ [ [ [ [ [ [ {} ]]]]]]]]]]]]]]]]]]]][ [ [ [ [ [ [ [ {} ]]]]]]]]]]]]]]]]]]]]]]]]]]]]]\end{tikzpicture} \\
50\% &60\% & 70\% & 80\% & 90\% \\
\end{tabular}
\caption{Example of Motzkin trees with 101 nodes generated with our algorithm, where the proportion of unary nodes varies from 0\% to 90\%. \label{fig:rte}}
\end{figure}

\begin{figure}
\begin{center}
\rotatebox{-90}{\includegraphics[scale=0.3]{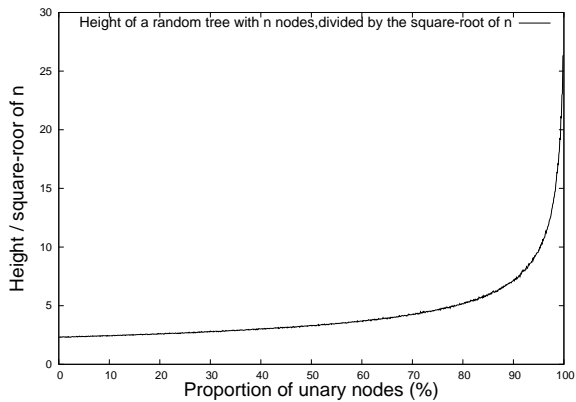}}
\caption{In this example, all random trees have $n=1000$ nodes. For each proportion of unary nodes, varying from $0$ to $99,9$ percent, 
$10\, 000$ Motzkin trees have been generated. The curve shows how the average height of Motzkin trees, divided by the square-root of $n$. 
\label{fig:curv-height} }
\end{center}
\end{figure}

In the following, we study the height of Motzkin trees according to the proportion of unary nodes, using
exclusively probabilistic arguments.

The continuum random tree (CRT) is a random continuous tree 
defined by Aldous \cite{A93}, which is closely related to Brownian 
motion. In particular, the height of the CRT has the same law as the maximum 
of a Brownian excursion. The CRT can be viewed as the renormalized 
limit of several models of large trees, in particular, critical 
Galton-Watson trees with finite variance conditioned to have a large 
population \cite{G98,D03,M08}. Our model does not exactly fit into 
this framework, however, it is quite clear that the proofs can be adapted 
to our situation. We show here a convergence result related to the height 
of Motzkin trees.

\begin{thm}
Let $(c_n, n\geq 1)$ be a sequence of integers such that
$c_n=o(n)$ and $(\log n)^2=o(c_n)$. Then one can construct, on a single 
probability space, a family $(T_n, n\geq 1)$ of random trees 
and a random variable $H>0$ such that 

(i) for every $n\geq 1$, $T_n$ is a uniform Motzkin tree with $n$ 
vertices and $c_n+1$ leaves. 

(ii) $H$ has the law of the height of the CRT 

(iii) almost surely, 
$$
\frac{\sqrt{c_n}}{n}height(T_n)\to H
$$
\end{thm}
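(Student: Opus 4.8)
The plan is to realize all the trees $T_n$ simultaneously as explicit functionals of a single infinite sequence of independent random variables, and then to prove the almost-sure convergence by a coupling/Skorokhod-type argument that passes through the contour (or Lukasiewicz) path. First I would recall that a uniform Motzkin tree with $n$ vertices and $c_n+1$ leaves corresponds, via the bijection of the Lukasiewicz theorem of Section~\ref{sec:def}, to a uniform $f$-Lukasiewicz word with a prescribed number of each letter, where $f(a_1)=-1$ (leaves), $f(a_2)=0$ (unary), $f(a_3)=1$ (binary); the composition is $(c_n+1, n-2c_n-1, c_n)$ since the number of binary nodes equals (leaves $-1$). Equivalently, by the cyclic lemma (Lemma~\ref{lm:unique}), such a word is a uniform cyclic shift of a uniform valid word, i.e. a uniform shuffle of the multiset of letters, i.e. the increments of an exchangeable random walk bridge. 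So the height of $T_n$ is, up to an additive $O(1)$, the maximum of the \emph{Lukasiewicz path} of this bridge reweighted for the depth-first exploration — more precisely one should work with the depth (contour) process, whose maximum is exactly the height, and which is a deterministic functional of the Lukasiewicz path.

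The key step is a scaling analysis of this conditioned walk. Let $W^{(n)}$ be the bridge of length $n$ with $c_n+1$ steps $=-1$, $c_n$ steps $=+1$, and $n-2c_n-1$ steps $=0$, started from $0$. Its total variance is $\sigma_n^2 := \mathrm{Var}$ of a single exchangeable step $\approx 2c_n/n$, so $\mathrm{Var}(W^{(n)}_{\lfloor nt\rfloor})\approx 2c_n\,t(1-t)$. Thus the natural normalization of the walk is by $\sqrt{c_n}$, and Donsker-type invariance principles for conditioned walks (the walk conditioned to be an excursion-type path, after the cyclic shift) give that $\bigl(W^{(n)}_{\lfloor nt\rfloor}/\sqrt{2c_n}\bigr)_{t\in[0,1]}$ converges to a normalized Brownian excursion $\mathbf{e}$. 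The contour/height process is asymptotically $2/\sigma_n^2 \cdot (\text{walk minus running min})$ in the standard Galton-Watson heuristic, which here gives $height(T_n)\approx \frac{2}{\sigma_n^2}\cdot\sqrt{2c_n}\cdot\max \mathbf{e} = \frac{2n}{2c_n}\sqrt{2c_n}\,\max\mathbf{e}\cdot\frac1{\sqrt 2}$, i.e. $\frac{\sqrt{c_n}}{n}height(T_n)\to 2\max\mathbf{e} =: H$, and $2\max\mathbf{e}$ has the law of the height of the CRT. I would cite \cite{A93,G98,D03,M08} for the fact that the proofs in the finite-variance Galton-Watson setting adapt verbatim once one has the right variance normalization, as the paper itself signals.

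To upgrade convergence in distribution to \emph{almost-sure} convergence on a single probability space, the plan is to build everything from one doubly-infinite i.i.d. family. Concretely: take a single Brownian excursion $\mathbf{e}$ (hence $H=2\max\mathbf{e}$ is defined once and for all), and use a Koml\'os--Major--Tusn\'ady / strong-approximation coupling to construct, on the same space, each conditioned walk $W^{(n)}$ so that $\sup_{t}\bigl| W^{(n)}_{\lfloor nt\rfloor}/\sqrt{2c_n} - \mathbf{e}(t)\bigr| = O\!\left(\frac{\log n}{\sqrt{c_n}}\right)$ almost surely; then derive $T_n$ deterministically from $W^{(n)}$ by the bijection. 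The hypothesis $(\log n)^2 = o(c_n)$ is exactly what makes this error term $o(1)$ with room to spare, hence summable-after-Borel--Cantelli, giving the almost-sure statement; the hypothesis $c_n=o(n)$ ensures the number of unary nodes dominates so that the step distribution degenerates in the right way and the contour-to-walk comparison constant is the stated one (and that the reweighting needed to pass from the exchangeable bridge to the "one cyclic shift" excursion is asymptotically harmless). I would also need the elementary comparison $|height(T_n) - \max(\text{contour process})|=O(1)$ and that the contour maximum and the Lukasiewicz-path maximum agree to leading order, which is a standard deterministic lemma on the bijection.

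\textbf{Main obstacle.} The hard part is the strong (almost-sure) coupling of the \emph{conditioned} walks $W^{(n)}$ to the \emph{fixed} excursion $\mathbf{e}$ with an error of order $\log n/\sqrt{c_n}$: unconditioned KMT is classical, but one must (a) transfer it through the bridge conditioning and the single cyclic shift of the cyclic lemma, and (b) control it uniformly despite the step distribution changing with $n$ (variance $\to 0$). A clean way is to couple the bridge to a Brownian bridge by KMT, condition the Brownian bridge to stay positive (Vervaat transform to get $\mathbf{e}$), and check the discrete cyclic-lemma shift is within $o(\sqrt{c_n})$ of the continuous Vervaat shift. Everything else — the variance computation, the contour/walk asymptotic, the identification of the limit with the CRT height — is routine given the cited Galton-Watson literature.
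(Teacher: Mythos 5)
Your plan is a genuinely different route from the paper's: you work directly with the Lukasiewicz/contour path of the Motzkin tree, invoke an invariance principle for the composition-conditioned walk, and try to upgrade to almost-sure convergence via a KMT-type strong coupling of every conditioned walk to one fixed Brownian excursion. The paper instead decomposes $T_n$ into a binary skeleton $S_n$ with $c_n+1$ leaves (grown by R\'emy's algorithm, which supplies the almost-sure convergence $height(S_n)/\sqrt{c_n}\to H$ for free) plus ``pipes'' of unary nodes on the $2c_n$ edges, represents the pipe lengths as i.i.d.\ geometric variables of mean $m_n\approx n/(2c_n)$ conditioned on their sum, and controls the depth of every leaf by exponential domination, a Laplace-transform/Chernoff bound, a Gnedenko local limit theorem to undo the conditioning, a union bound over the $c_n+1$ leaves, and Borel--Cantelli (this is where $(\log n)^2=o(c_n)$ enters, exactly as you guessed it would).

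However, as written your proposal has two genuine gaps, both located at the step that carries the actual content of the theorem. First, the passage from the Lukasiewicz path to the height is not ``a standard deterministic lemma'': in this regime the Lukasiewicz-path maximum is of order $\sqrt{c_n}$ while the height is of order $n/\sqrt{c_n}$, so your closing claim that ``the contour maximum and the Lukasiewicz-path maximum agree to leading order'' is false; they differ by the diverging factor $\approx 2/\sigma_n^2=n/c_n$. Establishing that the depth process is uniformly $\sim\frac{2}{\sigma_n^2}\times$(Lukasiewicz path) when the offspring law varies with $n$ and its variance tends to $0$ is precisely what must be proved, and the references you lean on (Aldous, and the conditioned Galton--Watson literature with a \emph{fixed} finite-variance offspring law) do not cover this degenerate triangular-array setting -- the whole point of the theorem is that it sits outside that universality class, so ``adapts verbatim'' begs the question. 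The paper's pipe decomposition is exactly the hands-on substitute for this missing step: the depth of a leaf is a sum of at most $h_n\approx H\sqrt{c_n}$ conditioned geometrics of mean $m_n$, and concentration of that sum gives the factor $m_n$. Second, the KMT coupling of the composition-conditioned bridge (with step variance $2c_n/n\to 0$), through the cyclic/Vervaat shift, to a single fixed excursion with error $O(\log n/\sqrt{c_n})$ is asserted, not proved, and is not an off-the-shelf result; you correctly flag it as the main obstacle, but flagging it does not fill it, and without it your argument only yields convergence in distribution (the paper's second, weaker theorem), not the almost-sure statement. A smaller point: your constant-tracking is inconsistent (the displayed computation gives $\max\mathbf{e}$ but you then declare $H=2\max\mathbf{e}$), which should be fixed if the limit is to be identified with the CRT height.
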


\begin{proof}
The proof's idea is the following:
\begin{itemize}
\item
A Motzkin tree can been seen as a binary tree with $2c_n+1$ nodes in which we each node can be replaced by a sequence
of unary nodes. If $n$ is the size of the Motzkin tree, then the number of unary node is $n-2c_n-1$.
\item
The height of a leaf in the Motzkin tree is equal its length in the binary tree plus the lengths of the sequences of unary nodes
between the leaf and the tree's root.
\item
We study the probability that the lengths sum of the sequences of unary nodes between a given leaf and the tree's root is equal to a given value.
\item
We use this result to frame the generic height of $T_n$. 
\end{itemize}

We assume that $(c_n, n\geq 1)$ is non-decreasing, otherwise, the
proof can be easily adapted.
Let us call the skeleton of a  Motzkin tree the binary tree obtained by 
forgetting the vertices having one child.
Denote by $S_n$ the skeleton of $T_n$. For a leaf $l$, let $d(l)$ be
the distance of $l$ to the root in $S_n$ and  $D(l)$
the distance of $l$ to the root in $T_n$.

First,  one can construct the sequence $(S_n, n\geq 1)$ by R\'emy's algorithm
\cite{R85} and it can be shown that $S_n$ converges in a strong sense 
to a CRT \cite{CH14}, in particular,
$$
\frac{height(S_n)}{\sqrt{c_n}}\to H
$$
where  $H$ has the law of the height of the CRT.

Next, for every $n\geq 1$, one can obtain $T_n$ from $S_n$ by replacing 
each edge $e$
of $S_n$ with a ``pipe'' containing $X_e$ nodes of degree 2. The family 
$(X_e)$ is a $2c_n$-dimensional random vector with non-negative integer 
entries, and it is uniformly distributed over all vectors of this kind 
such that the sum of the entries is $n-2c_n-1$. Let us denote 
$(X_e)=(X_1,\ldots, X_{2c_n})$ (we should write 
$(X^{(n)}_1,\ldots, X^{(n)}_{2c_n})$ but we want to make the notation lighter).

It is a classical remark that the random variable $(X_1,\ldots, X_{2c_n})$ has the same law as 
$(Y_1,\ldots, Y_{2c_n})$ conditional on the event $\sum_i Y_i=n-2c_n-1$,
where the $Y_i$ are independent, geometric random variables with mean
$$
m_n=\frac{n-2c_n-1}{2c_n}
$$ 
Moreover, since the sum $\sum_i Y_i$ has mean $n-2c_n-1$ and variance 
$\sim c_n m_n^2$, a classical local limit theorem \cite{G48}
tells us that there exists 
a constant $c>0$ such that for every $n\geq 1$,
\begin{equation}\label{cond}
\bbp(\sum_i Y_i=n-2c_n-1)\geq \frac{1}{c\sqrt{c_n}m_n}
\end{equation}

Fix $\varepsilon>0$. Pick at random a realization of R\'emy's algorithm,
yielding a sequence of binary trees $(S_n, n\geq 1) $ such
for every $n\geq 1$, $S_n,$ has $c_n+1$ leaves. 
Then almost surely, there exists $H>0$ such that the height of $S_n$, which
we denote $h_n$, satisfies
\begin{equation}\label{cvheight}
\frac{h_n}{\sqrt{c_n}}\to H
\end{equation}
From now on, since we have chosen our sequence $(S_n, n\geq 1) $, 
the symbols $\bbp$ and $\bbe$ will refer to the probability 
and expectation with respect to the random variables
$(X_i)$, $(Y_i)$, $(Z_i)$. 

If a leaf $l$ in $S_n$  is at a distance $d(l)$ from the root, then
its  distance $D(l)$ from the root in $T_n$ is the sum of $d(l)$ random 
variables in the family $(X_e)$. Therefore,
\begin{eqnarray*}
\bbp(D(l)=k)&=&\bbp(X_1+\ldots+X_{d(l)}=k)\\
&=&\bbp(Y_1+\ldots+Y_{d(l)}=k|\sum_i Y_i=n-2c_n-1)\\
&=&\frac{\bbp(Y_1+\ldots+Y_{d(l)}=k,\sum_i Y_i=n-2c_n-1)}
{\bbp(\sum_i Y_i=n-2c_n-1)}\\
&\leq& 
\frac{\bbp(Y_1+\ldots+Y_{d(l)}=k)}
{\bbp(\sum_i Y_i=n-2c_n-1)}
\end{eqnarray*}
The right-hand side is maximized when  $d(l)=h_n$. 
We shall now use  independent, exponential random variables 
$(Z_1,\ldots, Z_{2c_n})$ with mean
\begin{equation}\label{mu}
\mu_n=\frac{1}{\log(m_n/(m_n-1))}
\end{equation}
It is easy to check that for every integer $k\geq 0$, 
$$\bbp(Z_1\in[k,k+1])=\bbp(Y_1=k).$$ Therefore, we can define $Y_i$ as the
integer part of $Z_i$ for each $i$.
Since $Z_i\geq Y_i$ for each $i$,
$$
 \bbp\left(\frac{Y_1+\ldots+Y_{h_n}}{\sqrt{c_n}m_n}\geq (1+\varepsilon)H\right)
\leq \bbp\left(\frac{Z_1+\ldots+Z_{h_n}}{\sqrt{c_n}m_n}\geq (1+\varepsilon)
H\right)
$$
Substracting the expectation,
\begin{eqnarray*}
&&\bbp\left(\frac{Z_1+\ldots+Z_{h_n}}{\sqrt{c_n}m_n}\geq (1+\varepsilon)H\right)\\
&&=
\bbp\left(\frac{Z_1+\ldots+Z_{h_n}-h_n\mu_n}{\sqrt{c_n}m_n}
\geq (1+\varepsilon)H-\frac{h_n\mu_n}{\sqrt{c_n}m_n}\right)
\end{eqnarray*}
Because of \eqref{mu} and \eqref{cvheight}, we have, for $n$ large enough,
$$
\left(1-\frac{\varepsilon}{2}\right)H\leq 
\frac{h_n\mu_n}{\sqrt{c_n}m_n}\leq \left(1+\frac{\varepsilon}{2}\right)H
$$
This entails that for $n$ large enough,
$$
(1+\varepsilon)H-\frac{h_n\mu_n}{\sqrt{c_n}m_n}
\leq \frac{\varepsilon H}{2}
$$
and therefore, 
\begin{eqnarray*}
&&
\bbp\left(\frac{Z_1+\ldots+Z_{h_n}-h_n\mu_n}{\sqrt{c_n}m_n}
\geq (1+\varepsilon)H-h_n\mu_n\right)\\
&&\leq 
\bbp\left(\frac{Z_1+\ldots+Z_{h_n}-h_n\mu_n}{\sqrt{c_n}m_n}
\geq \frac{\varepsilon H}{2}\right)
\end{eqnarray*}
We now use the Laplace transform: for every $\lambda>0$,
$$
\bbe \exp(\lambda Z_1-\mu_n)=\frac{e^{-\lambda \mu_n}}{1-\lambda\mu_n}
$$
The Markov inequality yields
$$
\bbp\left(\frac{Z_1+\ldots+Z_{h_n}-h_n\mu_n}{\sqrt{c_n}m_n}
\geq \frac{\varepsilon H}{2}\right)\leq 
\left(\frac{e^{-\lambda \mu_n}}{1-\lambda\mu_n}\right)^{h_n}
\exp\left(-\lambda\sqrt{c_n}m_n \frac{\varepsilon H}{2}\right)
$$
Let $(t_n)$ be a sequence of positive real numbers such that $t_n$ tends to 0 and 
that $\sqrt{c_n} t_n/\log n$ tends to infinity.
Choose $\lambda$ such that $\lambda \mu_n=t_n$. Then, 
$$
\bbp\left(\frac{Z_1+\ldots+Z_{h_n}-h_n\mu_n}{\sqrt{c_n}m_n}
\geq \frac{\varepsilon H}{2}\right)\leq 
\left(\frac{e^{-t_n}}{1-t_n}\right)^{h_n}
\exp\left(-\frac{t_n\sqrt{c_n}m_n\varepsilon H}{2\mu_n} \right)
$$
For $n$ large enough, we have $m_n\geq \mu_n/2$ and
$$
\frac{e^{-t_n}}{1-t_n}\leq 1+2t_n^2
$$
Therefore, for $n$ large enough
$$
\bbp\left(\frac{Z_1+\ldots+Z_{h_n}-h_n\mu_n}{\sqrt{c_n}m_n}
\geq \frac{\varepsilon H}{2}\right)\leq 
(1+2t_n^2)^{h_n}
\exp\left(-\frac{\varepsilon Ht_n\sqrt{c_n}}{4}\right)
$$
Summing up, if $n$ is large enough, then for every leaf $l$, 
$$
\bbp\left(\frac{D(l)}{\sqrt{c_n}m_n}\geq (1+\varepsilon)H\right)
\leq 
\frac{(1+2t_n^2)^{h_n}\exp\left(-\frac{\varepsilon Ht_n\sqrt{c_n}}{4}\right)}
{\bbp(\sum_i Y_i=n-2c_n-1)}
$$
Using the estimate \eqref{cond},
$$
\bbp\left(\frac{D(l)}{\sqrt{c_n}m_n}\geq (1+\varepsilon)H\right)
\leq 
c\sqrt{c_n}m_n
(1+2t_n^2)^{h_n}{\exp\left(-\frac{\varepsilon Ht_n\sqrt{c_n}}{4}\right)}
$$
Since there are $c_n+1$ leaves, and since the probability of the union 
is less that the sum of the probabilities, for $n$ large enough,
$$
\bbp\left(\frac{height(T_n)}{\sqrt{c_n}m_n}\geq (1+\varepsilon)H\right)
\leq 
 c(c_n+1)\sqrt{c_n}m_n
(1+2t_n^2)^{h_n}\exp\left(-\frac{\varepsilon Ht_n\sqrt{c_n}}{4}\right)
$$
The upper bound can be rewritten as
$$
c
\exp\left(h_n\log(1+2t_n^2)-\frac{\varepsilon Ht_n\sqrt{c_n}}{4}+\log m_n
+\frac{3}{2}\log (c_n+1)\right)
$$
Recall that for $n$ large enough, 
$$
h_n\leq (1+\varepsilon/2)H\sqrt{c_n}
$$
and then our bound becomes
$$
\exp\left(H\sqrt{c_n}\left[(1+\varepsilon/2)\log(1+2t_n^2)-
\frac{\varepsilon t_n}{4}\right]+\log m_n
+\frac{3}{2}\log (c_n+1)\right)
$$
Since $t_n\to 0$, for $n$ large enough,
$$
[(1+\varepsilon/2)\log(1+2t_n^2)-
\frac{\varepsilon t_n}{4}]\geq -\frac{\varepsilon t_n}{8}
$$
and so for $n$ large enough, our bound becomes
$$
b_n=\exp\left(\frac{-H\varepsilon t_n\sqrt{c_n}}{8}+\log m_n
+\frac{3}{2}\log (c_n+1)\right)
$$
Now because of the assumption
that $\sqrt{c_n} t_n/\log n\to \infty$, we remark that $\sum b_n<\infty$. Thus by the Borel-Cantelli
lemma, almost surely, conditional on the sequence $ (S_n)$, for $n$ 
large enough,
$$
\frac{height(T_n)}{\sqrt{c_n}m_n}\leq (1+\varepsilon)H
$$
Integrating with respect to the law of the sequence $ (S_n)$, we find that
almost surely, there exists a random variable $H$ which has the
law of the height of the CRT and such that  for $n$ large enough,
$$
\frac{height(T_n)}{\sqrt{c_n}m_n}\leq (1+\varepsilon)H
$$
Likewise, one shows that almost surely, for $n$ large enough,
$$
\frac{height(T_n)}{\sqrt{c_n}m_n}\geq (1-\varepsilon)H
$$
This being true for every positive $\varepsilon$, our result is established.
\end{proof}

{\bf Remark}
In the case when the number of leaves is proportional to the number of 
vertices, $c_n\sim k n$ for some constant $k\in (0, 1/2]$, 
it can be shown by the same arguments that 
$\frac{height(T_n)}{\sqrt{n}}$ converges to $2(1-k)H$.

In the case when $(\log n)^2/c_n$ does not tend to 0, a refinement in the
proof is necessary. Typically, replacing
the inequality \eqref{cond} with a stochastic
domination argument would prove that the height of the tree converges 
in distribution whenever $c_n\to \infty$. To prove an almost sure convergence, 
a more detailed construction would be needed.\\

{\bf General case}

We only assume that $c_n$ tends to infinity. The construction of the skeleton
and the convergence of R\'emy's algorithm still hold. 
The representation of the variables $X_i$ as conditioned versions of the $Y_i$
can be refined in the following manner:
\begin{eqnarray*}
&&\bbp(X_1+\ldots+X_{d(l)}\geq A)\\
&&=\bbp(Y_1+\ldots+Y_{d(l)}\geq A|\sum_i Y_i=n-2c_n-1)\\
&&=\sum_{k=A}^\infty \bbp(Y_1+\ldots+Y_{d(l)}=k|\sum_i Y_i=n-2c_n-1)\\
&&=\sum_{k=A}^\infty\bbp(Y_1+\ldots+Y_{d(l)}=k|\sum_{i=d(l)}^{2c_n} Y_i=n-2c_n-1-k)\\
&&=\sum_{k=A}^\infty
\frac{\bbp(Y_1+\ldots+Y_{d(l)}=k,\sum_{i=d(l)}^{2c_n} Y_i=n-2c_n-1-k)}
{\bbp(\sum_i Y_i=n-2c_n-1)}\\
&&=\sum_{k=A}^\infty
\frac{\bbp(Y_1+\ldots+Y_{d(l)}=k)\bbp(\sum_{i=d(l)}^{2c_n} Y_i=n-2c_n-1-k)}
{\bbp(\sum_i Y_i=n-2c_n-1)}\\
\end{eqnarray*}

Gnedenko's result also gives the existence of a real $C$ such that for 
every integer $k$,
\begin{equation}\label{cond'}
\bbp(\sum_{i=d(l)}^{2c_n} Y_i=n-2c_n-1-k))\leq \frac{C}{\sqrt{c_n-d(l)}m_n}
\end{equation}
From \eqref{cond} and \eqref{cond'} we deduce that if $d(l)\leq c_n/2$,
the following stochastic domination bound hols:
\begin{eqnarray*}
&&\bbp(Y_1+\ldots+Y_{d(l)}\geq A|\sum_i Y_i=n-2c_n-1)\\
&&=\sum_{k=A}^\infty
\frac{\bbp(Y_1+\ldots+Y_{d(l)}=k)\bbp(\sum_{i=d(l)}^{2c_n} Y_i=n-2c_n-1-k)}
{\bbp(\sum_i Y_i=n-2c_n-1)}\\
&&\leq \frac{C\sqrt{2}}{c}\sum_{k=A}^\infty\bbp(Y_1+\ldots+Y_{d(l)}=k)
\end{eqnarray*}
To sum up, if  $d(l)\leq c_n/2$,
\begin{equation}\label{dom}
\bbp(X_1+\ldots+X_{d(l)}\geq A)\leq 
\frac{C\sqrt{2}}{c}
\bbp(Y_1+\ldots+Y_{d(l)}\geq A)
\end{equation}
Recall that for every leaf $l$ of $S_n$, $d(l)\leq h_n$, and that 
because of \eqref{cvheight}, the condition $d(l)\leq c_n/2$ is satisfied 
for all leaves if $n$ is large enough. The bound using conditioning gave 
$$
\bbp\left(\frac{D(l)}{\sqrt{c_n}m_n}\geq (1+\varepsilon)H\right)
\leq 
\frac{(1+2t_n^2)^{h_n}\exp\left(-\frac{\varepsilon Ht_n\sqrt{c_n}}{4}\right)}
{\bbp(\sum_i Y_i=n-2c_n-1)}
$$
But using the stochastic domination bound \eqref{dom}, we can improve this to
$$
\bbp\left(\frac{D(l)}{\sqrt{c_n}m_n}\geq (1+\varepsilon)H\right)
\leq 
 \frac{C\sqrt{2}}{c}
(1+2t_n^2)^{h_n}\exp\left(-\frac{\varepsilon Ht_n\sqrt{c_n}}{4}\right)
$$
for $n$ large enough. Taking $t_n=c_n^{-1/4}$ and using \eqref{cvheight}, 
we find that the probability
$$
\bbp\left(\frac{D(l)}{\sqrt{c_n}m_n}\geq (1+\varepsilon)H\right)
$$
tends to 0 as $n$ goes to infinity, for every positive $\varepsilon$.
Likewise, if $e_n$ is a leaf in $S_n$ such that $d(l)= h_n$, one
can prove that the probability 
$$
\bbp\left(\frac{D(e_n)}{\sqrt{c_n}m_n}\leq (1-\varepsilon)H\right)
$$
goes to  0 as $n$ goes to infinity. 
This proves that 
$$
\frac{height(S_n)}{\sqrt{c_n}}
$$
converges in distribution to $H$. So we have the more general result

\begin{thm}
Let $(c_n, n\geq 1)$ be a sequence of integers such that
$c_n\to\infty$ as $n\to\infty$. Let $(T_n, n\geq 1)$ be a 
family of random trees such that for every $n\geq 1$, $T_n$ is a 
uniform Motzkin tree with $n$ vertices and $c_n+1$ leaves. Then 
$$
\frac{\sqrt{c_n}}{n}height(T_n)
$$ converges in distribution to the law of the height of a CRT. 
\end{thm}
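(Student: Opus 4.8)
The plan is to re-run the machinery of the previous theorem almost verbatim, with two modifications that are precisely what let the hypothesis $(\log n)^2=o(c_n)$ be dropped: (1) we only aim for convergence in distribution, so the Borel--Cantelli step disappears and it suffices that the relevant probabilities tend to $0$; (2) we control the conditioning denominator not by the single local limit estimate \eqref{cond} but by the clean two-sided ratio \eqref{dom}, obtained from \eqref{cond} together with Gnedenko's bound \eqref{cond'}.

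First I would set up the same decomposition $T_n\leftrightarrow(S_n,(X_e))$. The skeleton $S_n$, obtained by contracting the degree-$2$ vertices, is a uniform binary tree with $c_n+1$ leaves and $2c_n$ edges; build the whole sequence $(S_n)$ by R\'emy's algorithm \cite{R85}, so that by \cite{CH14} one has $height(S_n)/\sqrt{c_n}\to H$ \emph{almost surely} along this coupling, with $H$ the height of the CRT. Conditionally on $S_n$, the tree $T_n$ is recovered by inflating each edge $e$ with a pipe of $X_e$ degree-$2$ vertices, where $(X_e)$ is uniform over nonnegative integer vectors with $\sum_eX_e=n-2c_n-1$; equivalently $(X_e)\stackrel{d}{=}(Y_e)$ conditioned on $\sum_eY_e=n-2c_n-1$, with the $Y_e$ i.i.d.\ geometric of mean $m_n=(n-2c_n-1)/(2c_n)$. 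From now on I condition on the realization of $(S_n)$: then $h_n:=height(S_n)$ satisfies $h_n/\sqrt{c_n}\to H$ with $H$ fixed, so $h_n=o(c_n)$ and every leaf $l$ of $S_n$ has $d(l)\le h_n\le c_n/2$ for $n$ large, which is the hypothesis of \eqref{dom}.

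Next comes the core tail estimate for $D(l)=d(l)+\sum_{i=1}^{d(l)}X_{e_i}$, where $e_1,\dots,e_{d(l)}$ is the ancestral path of $l$. Because $d(l)\le c_n/2$, \eqref{dom} converts this conditioned sum into an unconditioned one up to a constant: $\bbp(X_1+\dots+X_{d(l)}\ge A)\le\frac{C\sqrt2}{c}\bbp(Y_1+\dots+Y_{d(l)}\ge A)$. I then dominate $Y_i$ by an exponential $Z_i$ of mean $\mu_n=1/\log(m_n/(m_n-1))$ and apply Markov's inequality to the Laplace transform with $\lambda\mu_n=t_n$, the key choice being $t_n=c_n^{-1/4}$: it makes $t_n\to0$, so the per-edge cost $h_n\log(1+2t_n^2)=O(\sqrt{c_n}\,t_n^2)=O(1)$ is negligible, while $t_n\sqrt{c_n}=c_n^{1/4}\to\infty$ overwhelms the only surviving combinatorial loss, the factor $c_n+1$ from the union bound over leaves. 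Running the same computation as above gives, for each $\varepsilon>0$,
$$
\bbp\left(\frac{height(T_n)}{\sqrt{c_n}\,m_n}\ge(1+\varepsilon)H\right)\le \frac{C\sqrt2}{c}\,(c_n+1)\,(1+2t_n^2)^{h_n}\exp\left(-\frac{\varepsilon H t_n\sqrt{c_n}}{4}\right)\to 0 .
$$
For the matching lower bound I would run the same Laplace-transform argument for a single deepest leaf $e_n$ (so no union bound is needed), getting $\bbp\bigl(D(e_n)/(\sqrt{c_n}m_n)\le(1-\varepsilon)H\bigr)\to0$. Combining the two, integrating over the law of $(S_n)$ and using $H>0$ a.s., yields $height(T_n)/(\sqrt{c_n}\,m_n)\to H$ in probability, hence in distribution; since $\sqrt{c_n}\,m_n=(n-2c_n-1)/(2\sqrt{c_n})$ is a fixed rescaling of $n/\sqrt{c_n}$, this is the announced convergence of $\frac{\sqrt{c_n}}{n}height(T_n)$ to the law of the height of the CRT (the rescaling constant being absorbed into the limit exactly as in the first theorem; the proportional regime $c_n\sim kn$, where $m_n$ stays bounded, being treated by the same estimates as noted in the Remark).

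The point that needs care is the interplay between the union bound and the exponential decay. In the original proof the crude bound \eqref{cond} leaves the denominator $\bbp(\sum_iY_i=n-2c_n-1)$ contributing a factor $\sqrt{c_n}\,m_n$; since $m_n$ may be of order $n$, killing it would require $t_n\sqrt{c_n}\gg\log n$ and hence $(\log n)^2=o(c_n)$ --- precisely the assumption we want to avoid. Passing to \eqref{dom} removes that factor, so one only needs $t_n\sqrt{c_n}\gg\log c_n$, which holds for $t_n=c_n^{-1/4}$ as soon as $c_n\to\infty$; and since we no longer invoke Borel--Cantelli, the mere fact that the bound tends to $0$ suffices. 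This is where the whole improvement lives, and I expect it to be the only genuinely delicate step, the rest being the bookkeeping already carried out above.
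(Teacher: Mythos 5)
Your proposal is correct and takes essentially the same route as the paper's own argument: the skeleton-plus-pipes decomposition with pipe lengths represented as geometrics conditioned on their sum, the Gnedenko two-sided local-limit estimates combined into the stochastic domination bound valid for $d(l)\leq c_n/2$, the exponential/Laplace-transform tail bound with the choice $t_n=c_n^{-1/4}$, a union bound over the $c_n+1$ leaves for the upper deviation and a deepest-leaf argument for the lower one, and finally convergence in probability conditionally on the R\'emy-coupled skeleton sequence, with Borel--Cantelli no longer needed since only convergence in distribution is claimed. You even make explicit the union-bound bookkeeping (needing $t_n\sqrt{c_n}\gg\log c_n$) that the paper leaves implicit, but the substance of the proof is the same.
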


\section{Conclusion}
In this paper, we gave two new samplers for rooted planar trees that satisfies a given partition of degrees. This sampler is now optimal in terms of random bit complexity.
We apply it to predict the average height of a random Motzkin in function of its frequency of unary nodes. We then prove some unconventional height phenomena (i.e. outside the universal $\Theta(\sqrt{n} )$ behaviour. Our work can certainly be extended to more complicate properties than the list of degrees. 
%We can expect similar samplers for simple patterns constraints, such as  fixed numbers of paths of fixed length, or fixed numbers of complete trees. 
Letters of a tree-alphabet could for instance encode more complicated patterns, whose number of leaves would be given by the function $f$.

\newpage

\bibliographystyle{alpha}
\bibliography{article}

\begin{thebibliography}{MTW04}

\bibitem[Ald93]{A93}
David Aldous.
\newblock The continuum random tree. iii.
\newblock {\em Ann. Probab.}, 21(1):248--289, 1993.

\bibitem[ARS97a]{ARS97}
Laurent Alonso, Jean-Luc Remy, and Ren{\'e} Schott.
\newblock A linear-time algorithm for the generation of trees.
\newblock {\em Algorithmica}, 17(2):162--183, 1997.

\bibitem[ARS97b]{ARS97-2}
Laurent Alonso, Jean-Luc Remy, and Ren{\'e} Schott.
\newblock Uniform generation of a schr{\"o}der tree.
\newblock {\em Inf. Process. Lett.}, 64(6):305--308, 1997.

\bibitem[BBJ13]{BBJ13}
Axel Bacher, Olivier Bodini, and Alice Jacquot.
\newblock Exact-size sampling for motzkin trees in linear time via boltzmann
  samplers and holonomic specification.
\newblock In {\em ANALCO}, pages 52--61, 2013.

\bibitem[BP10]{BP10}
Olivier Bodini and Yann Ponty.
\newblock {Multi-dimensional Boltzmann Sampling of Languages}.
\newblock In {\em {DMTCS Proceedings}}, number~01 in AM, pages 49--64, Vienne,
  Autriche, 2010.
\newblock 12pp.

\bibitem[CHar]{CH14}
N.~Curien and B.~Haas.
\newblock The stable trees are nested.
\newblock {\em Prob. Theory Rel. Fields}, to appear.

\bibitem[Dev86]{D86}
L.~Devroye.
\newblock {\em Non-uniform random variate generation}.
\newblock Springer-Verlag, 1986.

\bibitem[Dev12]{D12}
Luc Devroye.
\newblock Simulating size-constrained galton-watson trees.
\newblock {\em SIAM J. Comput.}, 41(1):1--11, 2012.

\bibitem[DFLS04]{DFLS04}
Philippe Duchon, Philippe Flajolet, Guy Louchard, and Gilles Schaeffer.
\newblock Boltzmann samplers for the random generation of combinatorial
  structures.
\newblock {\em Combinatorics, Probability {\&} Computing}, 13(4-5):577--625,
  2004.

\bibitem[DPT10]{DPT10}
Alain Denise, Yann Ponty, and Michel Termier.
\newblock {Controlled non uniform random generation of decomposable
  structures}.
\newblock {\em Theoretical Computer Science}, 411(40-42):3527--3552, 2010.

\bibitem[Duq]{D03}
T.~Duquesne.
\newblock A limit theorem for the contour process of conditioned galton-watson
  trees.

\bibitem[FS09]{FS09}
Philippe Flajolet and Robert Sedgewick.
\newblock {\em Analytic Combinatorics}.
\newblock Cambridge University Press, 2009.

\bibitem[FZC94]{FVCZ94}
Philippe Flajolet, Paul Zimmermann, and Bernard~Van Cutsem.
\newblock A calculus for the random generation of labelled combinatorial
  structures.
\newblock {\em Theor. Comput. Sci.}, 132(2):1--35, 1994.

\bibitem[GK98]{G98}
J.~Geiger and G.~Kersting.
\newblock The galton-watson tree conditioned on its height.
\newblock {\em Proceedings 7th Vilnius Conference.}, 1998.

\bibitem[Gne48]{G48}
B.~V. Gnedenko.
\newblock On a local limit theorem of the theory of probability.
\newblock {\em Uspehi Matem. Nauk (N. S.)}, 3(3(25)):187--194, 1948.

\bibitem[HH97]{H97}
Te~Sun Hao and M.~Hoshi.
\newblock Interval algorithm for random number generation.
\newblock {\em Information Theory, IEEE Transactions on}, 43(2):599--611, 1997.

\bibitem[KY76]{KY76}
Donald~E. Knuth and Andrew~C. Yao.
\newblock {The Complexity of Nonuniform Random Number Generation}.
\newblock In J.~F. Traub, editor, {\em Algorithms and Complexity: New
  Directions and Recent Results}. Academic Press, New York, 1976.

\bibitem[Mar08]{M08}
Ph. Marchal.
\newblock A note on the fragmentation of a stable tree.
\newblock {\em Discrete Math. Theor. Comput. Sci. Proc.}, pages 489--499, 2008.

\bibitem[MTW04]{MTW04}
George Marsaglia, Wai~Wan Tsang, and Jingbo Wang.
\newblock Fast generation of discrete random variables.
\newblock {\em Journal of Statistical Software}, 11(3):1--11, 7 2004.

\bibitem[Rem85]{R85}
Jean-Luc Remy.
\newblock Un proc{\'e}d{\'e} it{\'e}ratif de d{\'e}nombrement d'arbres binaires
  et son application a leur g{\'e}n{\'e}ration al{\'e}atoire.
\newblock {\em ITA}, 19(2):179--195, 1985.

\bibitem[Tut64]{T64}
W.~T. Tutte.
\newblock The number of planted plane trees with a given partition.
\newblock {\em The American Mathematical Monthly}, 71(3):pp. 272--277, 1964.

\bibitem[Vos91]{V91}
Michael~D. Vose.
\newblock A linear algorithm for generating random numbers with a given
  distribution.
\newblock {\em IEEE Transactions on Software Engineering}, 17(9):972--975,
  1991.

\bibitem[Wal77]{W77}
Alastair~J. Walker.
\newblock {An Efficient Method for Generating Discrete Random Variables with
  General Distributions}.
\newblock {\em ACM Transactions on Mathematical Software}, 3(3):253--256,
  September 1977.

\end{thebibliography}
\tikzstyle{lien}=[->,>=stealth,rounded corners=5pt,thick]
\tikzset{individu/.style={draw,thick,fill=#1!25},
         individu/.default={green}}
% dÃ©finition de lâ€™arbre
%% \begin{tikzpicture}
%%   \node[individu] (B) at (0,0) {};
%%   \node[individu=blue] (P) at (-3,2) {};
%%   \node[individu=red] (M) at (3,2) {};
%%   \node[individu=blue] (GPP) at (-4.5,4) {};
%%   \node[individu=red] (GMP) at (-1.5,4) {};
%%   \node[individu=blue] (GPM) at (1.5,4) {};
%%   \node[individu=red] (GMM) at (4.5,4) {};
%%   \draw[lien] (B) |- (-1,1) -| (P);
%%   \draw[lien] (B) |- (1,1) -| (M);
%%   \draw[lien] (P) |- (-4,3) -| (GPP);
%%   \draw[lien] (P) |- (-2,3) -| (GMP);
%%   \draw[lien] (M) |- (2,3) -| (GPM);
%%   \draw[lien] (M) |- (4,3) -| (GMM);
%% \end{tikzpicture}

%% \begin{tikzpicture}
%%  \node [individu] 
%%  \child { \node [individu=blue]{a}
%%           \child { node [individu=blue]{b} }
%%           \child { node [individu=red]{c} }
%%         }
%%   \child { \node [individu=red]{e}
%%           \child { node [individu=blue]{f}}
%%           \child { node [individu=red]{g} }
%%         };
%% \end{tikzpicture}

\end{document}